\DeclareMathOperator{\codim}{\textup{codim}}
\DeclareMathOperator{\R}{\mathbb{R}}
\DeclareMathOperator{\SU}{\textup{SU}}
\DeclareMathOperator{\sgn}{\textup{sgn}}
\DeclareMathAlphabet{\mathpzc}{OT1}{pzc}{m}{it}
\newcommand{\ba}{\boldsymbol{a}}
\newcommand{\bX}{\boldsymbol{X}}
\newcommand{\bx}{\boldsymbol{x}}
\newcommand{\by}{\boldsymbol{y}}
\newcommand{\liprod}{\ \raisebox{.2ex}{$\llcorner$}\ }
\newcommand{\riprod}{\ \raisebox{.2ex}{$\lrcorner$}\ }
\newcommand{\pmp}{\substack{+\\[-2pt](-)}}
\theoremstyle{plain}
\newtheorem{thm}{Theorem}
\newtheorem{lem}[thm]{Lemma}
\newtheorem*{cor}{Corollary}
\newtheorem{prop}[thm]{Proposition}
\theoremstyle{definition}
\theoremstyle{remark}
\newtheorem*{ack}{Acknowledgements}
\title{Geometric extensions of many-particle Hardy inequalities}
\author{Douglas Lundholm\thanks{e-mail: dogge@math.kth.se
	\hspace{3.0cm} MSC classes: 35R45, 81Q10, 81Q80}}
\date{\scriptsize{Department of Mathematics, KTH Royal Institute of Technology\\ SE-100 44 Stockholm, Sweden}}
\begin{document}

\maketitle

\begin{abstract}
	Certain many-particle Hardy inequalities 
	are derived in a simple and systematic way
	using the so-called ground state representation for the
	Laplacian on a subdomain of $\mathbb{R}^n$.
	This includes geometric extensions of the standard Hardy 
	inequalities to involve volumes of simplices spanned by a subset 
	of points. Clifford/multilinear algebra is employed to 
	simplify geometric computations.
	These results and the techniques involved are relevant for
	classes of exactly solvable quantum systems 
	such as the Calogero-Sutherland models
	and their higher-dimensional generalizations,
	as well as for membrane matrix models,
	and models of more complicated particle interactions 
	of geometric character.
\end{abstract}

\setlength\arraycolsep{2pt}
\def\arraystretch{1.2}

\section{Introduction}

	During the past century, Hardy inequalities have 
	appeared in a variety of different forms in the literature 
	and have played an important role in analysis and mathematical physics 
	(see e.g. the books \cite{Mazya:85,Kufner-Opic:90} 
	and the reviews in \cite{Davies:99,Tidblom:thesis}).
	The standard Hardy inequality associated to 
	the Laplacian in $\R^d$, $d \ge 3$,
	is given by
	\begin{equation} \label{standard-Hardy-ineq}
		\int_{\R^d} |\nabla u|^2 \,dx \ge C_d \int_{\R^d} \frac{|u|^2}{|x|^2} \,dx,
	\end{equation}
	with the sharp constant
	$C_d = \frac{(d-2)^2}{4}$, 
	where $u$ is any function in the Sobolev space $H^1(\R^d)$,
	i.e. for which the l.h.s. is finite.
	It states explicitly that the Laplace operator $-\Delta$ on $\R^d$
	(defined via its quadratic form)
	is not only non-negative,
	but in a sense strictly positive, 
	in that it is bounded below by a potential
	which even increases in strength unboundedly 
	as the distance to the origin tends to zero.
	In quantum mechanics,
	this is a concrete manifestation of the uncertainty principle,
	and inequalities of this form have been crucial for
	e.g. rigorous proofs for the stability of 
	matter (see e.g. \cite{Lieb-Seiringer:10}).
	In such many-particle contexts it becomes relevant to also consider 
	extensions 
	of \eqref{standard-Hardy-ineq} involving mutual
	distances between a (possibly large) number, say $N$, of particles.
	Also the sharp values of the corresponding constants 
	as well as their dependence on $N$
	are relevant for physical applications.
	
	An area where the importance of the Hardy inequality and its many-particle 
	generalizations becomes particularly transparent
	is in the context of 
	exactly solvable quantum systems.
	For example, in the Calogero-Sutherland model \cite{Calogero:69,Sutherland:71} 
	for $N$ particles on the real line $\R$ with inverse-square interactions, 
	the corresponding many-particle Hardy inequality 
	\begin{equation} \label{standard-1D-many-particle}
		\int_{\R^N} \sum_{j=1}^N |\partial_j u|^2 \,dx 
			\ge \frac{1}{2} \int_{\R^N} \sum_{j<k} \frac{|u|^2}{(x_j-x_k)^2} \,dx,
		\ \ u \in H^1_0\big( \R^N \cap \{x_j \neq x_k\}_{j\neq k} \big),
	\end{equation}
	guarantees the
	stability of the model, i.e. the boundedness of the 
	energy from below,
	for a certain range of coupling parameters and for a wide class
	of external potentials.
	Although first arising as toy models,
	the Calogero-Sutherland models have proved to be very useful 
	in the study of a variety of physical phenomena, 
	such as 
	soliton wave propagation \cite{Chen-Lee-Pereira:79},
	quantum spin chains \cite{Haldane:88,Shastry:88},
	random matrices \cite{Simons-Lee-Altshuler:94},
	as well as for 
	anyons in the lowest Landau level \cite{Hansson-Leinaas-Myrheim:92,Ouvry:07}.
	Various generalizations of the particle interactions 
	in the Calogero-Sutherland 
	model, including to higher dimensions and more complicated 
	geometric potentials, 
	have also been considered;
	see e.g. \cite{Calogero-Marchioro:73,Murthy-Bhaduri-Sen:96,Ghosh:97,Ghosh-Rao:98,Calogero:99}.
	Considering the success of the original Calogero-Sutherland models,
	some of these generalizations are likewise also expected to be important,
	for example in the study of strongly correlated systems
	(further physical motivations for studying such models are discussed 
	in \cite{Calogero-Marchioro:73,Murthy-Bhaduri-Sen:96,Ghosh:97,Ghosh-Rao:98,Khare-Ray:97}).

	Hoffmann-Ostenhof et. al. have in \cite{H-O2-Laptev-Tidblom:08}
	studied many-particle generalizations 
	of the standard Hardy inequality \eqref{standard-Hardy-ineq}
	of the conventional type \eqref{standard-1D-many-particle} 
	in arbitrary dimensions,
	both for bosonic and fermionic 
	particles (i.e. completely symmetric resp. antisymmetric wave functions),
	as well as for magnetically interacting particles
	in two dimensions, 
	and determined the optimal behavior for the associated constants
	in these inequalities in the large-$N$ limit 
	for the bosonic or dist\-inguish\-able case
	(hence with implications for the coupling parameters in  
	associated exactly solvable models).
	In the fermionic case, the optimal large-$N$ behavior of the
	corresponding constants was studied in \cite{Frank-H-O-Laptev-Solovej}.
	The magnetic case, 
	relevant for models of anyons in two dimensions, 
	was reconsidered and improved in \cite{Lundholm-Solovej:anyon},
	leading to rigorous bounds for the energy of the anyon gas
	(see also \cite{Lundholm-Solovej:extended,Lundholm-Solovej:exclusion} 
	for recent applications).
	
	In this note we will focus on the case of bosons or distinguishable particles
	and use the so-called
	ground state representation for the Laplacian on a subdomain
	of $\R^n$ to derive the conventional (bosonic) many-particle 
	Hardy inequalities in a simple and systematic way.
	Using multilinear and Clifford algebra,
	the approach we take straightforwardly generalizes to other
	types of many-particle Hardy inequalities,
	involving geometric relatives and higher-dimensional
	analogs of distances between particles.
	Some of these generalizations coincide with the interaction potentials 
	of generalized Calogero-Sutherland models such as those studied in 
	\cite{Calogero-Marchioro:73,Murthy-Bhaduri-Sen:96,Ghosh-Rao:98,Calogero:99}.
	Also the case of critical dimension, which for the standard
	bosonic case is equal to two, is covered, 
	and corresponding 
	inequalities involving logarithms found. 
	We point out 
	that some of the generalizations presented 
	have also been considered in the context of Hardy inequalities
	by Laptev et. al. \cite{Laptev}.
	For purposes of illustrating the method and techniques involved
	in a more accessible way, we choose to start from a simple setting
	and build up a more general framework rather than to start with
	the most general (but technically complicated) theorem 
	and then specialize from that.
	
	The main purpose of this work is twofold.
	First, the new bounds which are derived in 
	Theorems \ref{thm:parallel-no3d}-\ref{thm:volume-all} prove that 
	the classes of models considered in \cite{Murthy-Bhaduri-Sen:96,Ghosh-Rao:98} are, 
	or can be made, well-defined independently of external potentials for a 
	corresponding range of coupling parameters, in that the 
	corresponding quadratic forms are bounded from below. 
	Furthermore, the ground state representations given here will be of use 
	in the spectral analysis of the associated operators 
	(compare e.g. the analysis of natural self-adjoint extensions for the 
	Calogero-Sutherland operators in \cite{Lundholm-Solovej:extended}, 
	for which knowledge of the ground state representation was essential). 
	However, the study of optimal constants in the large-N limit is 
	unfortunately very complicated due to the geometrically complicated 
	nature of the potentials and will have to be addressed in future work,
	but we do derive the sharp constants for the constituent potentials with
	a fixed number of particles in the interaction (see Appendix B).

	Second, although the ground state approach employed here 
	is well known in the literature on Hardy inequalities, 
	it is hoped that the structure 
	of the paper and its systematic study of such inequalities 
	will help bring out the usefulness of the technique to an even wider 
	community of mathematicians and physicists.
	In particular, in combination with the new approach employed here 
	using geometric algebra (whose usefulness is also well known 
	within a certain community, although
	the intersection between these two communities seems unfortunately 
	to be rather small \cite{Sobczyk:12})
	the emphasis is on the fact that both technically and geometrically 
	complicated results can be obtained in an efficient way.
	The geometrically most general form, 
	the Corollary to Theorem \ref{thm:volume-all} 
	which shows that a class of generalized particle interactions involving 
	volumes can be 
	considered as small perturbations of the free kinetic energy operator,
	would probably not have been manageable without these tools and 
	the systematic exposition.
	
	The paper is organised as follows.
	In Section \ref{sec:preliminaries} we state the preliminary setup
	which allows 
	for a straightforward and systematic derivation of the results.
	The main results are given in Section \ref{sec:many-particle}
	as 
	ground state representations for the conventional 
	many-particle Hardy inequalities in all dimensions
	(Theorems \ref{thm:many-particle-separation}-\ref{thm:many-particle-Hardy-2d},
	with an extension in Theorem \ref{thm:many-particle-Hardy-combined}),
	for some alternative geometric cases where 
	the origin is singled out as a special point
	(Theorems \ref{thm:many-particle-1d-product}-\ref{thm:parallel-3d}),
	and finally for inequalities involving volumes of simplices
	of points (Theorems \ref{thm:volume-noncritical}-\ref{thm:volume-all}).
	Conclusions are given in Section \ref{sec:conclusions}.
	Some computations involving multilinear algebra,
	and a brief note 
	on the sharpness of the derived constants,
	have been placed in an appendix.

\begin{ack}
	I am thankful to Fabian Portmann for 
	valuable feedback and 
	for collaboration on closely related
	subjects. I would also like to thank 
	Bergfinnur Durhuus, Jens Hoppe, Ari Laptev, Enno Lenzmann, and Jan Philip Solovej 
	for discussions.
	Financial support from the Danish Council for Independent Research,
	Nordita, and the Isaac Newton Institute (EPSRC Grant EP/F005431/1)
	is gratefully acknowledged. 
	The majority of this work was carried out at the University of Copenhagen.
\end{ack}

\section{Preliminaries: single-particle Hardy inequalities}
	\label{sec:preliminaries}

	As a preparation,
	we start by recalling the ground state representation
	for the Laplacian
	in a form which is well suited for 
	our applications, and 
	use it to derive the standard single-particle Hardy 
	inequalities w.r.t. a point and 
	a higher-dimensional subspace in $\R^d$.

\subsection{The ground state representation}

	We have the following simple but general 
	\emph{ground state representation} (GSR)
	for the (Dirichlet) Laplacian on a domain in $\mathbb{R}^n$:

\begin{prop}[GSR] \label{prop:GSR}
	Let $\Omega$ be an open set in $\mathbb{R}^n$ and
	let $f: \Omega \to \mathbb{R}_+ := (0,\infty)$ be twice differentiable. 
	Then, for any $u \in C_0^\infty(\Omega)$ and $\alpha \in \mathbb{R}$,
	\begin{equation}
		\int_\Omega |\nabla u|^2 \,dx
		= \int_\Omega \left( 
				\alpha(1 - \alpha) \frac{|\nabla f|^2}{f^2} + \alpha\frac{-\Delta f}{f}
			\right) |u|^2 \,dx
		+ \int_\Omega |\nabla v|^2 f^{2\alpha} \,dx, 
		\label{GSR}
	\end{equation}
	where $v := f^{-\alpha}u$. 
\end{prop}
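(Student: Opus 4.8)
The plan is to treat \eqref{GSR} as an algebraic identity obtained by substituting $u = f^\alpha v$ and integrating the resulting cross term by parts; there is no inequality hidden here, so the whole argument reduces to a pointwise expansion followed by a single application of the divergence theorem.

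First I would write $u = f^\alpha v$ and differentiate by the product and chain rules to get $\nabla u = \alpha f^{\alpha-1}(\nabla f)\,v + f^\alpha \nabla v$. Squaring produces three contributions to $|\nabla u|^2$, namely $\alpha^2 f^{2\alpha-2}|\nabla f|^2 v^2$, the cross term $2\alpha f^{2\alpha-1} v\,(\nabla f\cdot\nabla v)$, and $f^{2\alpha}|\nabla v|^2$. The last of these is already the final term of \eqref{GSR}, so only the first two require work.

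The crucial step is the cross term. I would rewrite $2v\,\nabla v = \nabla(v^2)$ so that the cross term becomes $\alpha\, f^{2\alpha-1}(\nabla f)\cdot\nabla(v^2)$, and then integrate by parts over $\Omega$. Since $u \in C_0^\infty(\Omega)$ and $f$ is strictly positive and twice differentiable, the function $v^2 = f^{-2\alpha}|u|^2$ is compactly supported in $\Omega$ and $f$ is bounded away from zero on that support, so all integrands are well defined and no boundary term appears. The integration by parts transfers the derivative onto $f^{2\alpha-1}\nabla f$, and the product rule gives $\nabla\!\cdot(f^{2\alpha-1}\nabla f) = (2\alpha-1)f^{2\alpha-2}|\nabla f|^2 + f^{2\alpha-1}\Delta f$.

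Finally I would collect terms. The $|\nabla f|^2 v^2$ contributions combine with coefficient $\alpha^2 - \alpha(2\alpha-1) = \alpha(1-\alpha)$, while the remaining piece carries $-\alpha f^{2\alpha-1}(\Delta f)v^2$. Substituting $v^2 = f^{-2\alpha}|u|^2$ throughout converts the powers of $f$ back, turning these into $\alpha(1-\alpha)\frac{|\nabla f|^2}{f^2}|u|^2$ and $\alpha\frac{-\Delta f}{f}|u|^2$, which is exactly the bracketed potential in \eqref{GSR}. The only point demanding genuine care is the justification of the integration by parts under the stated regularity of $f$; once the absence of boundary terms is established, the identity follows by straightforward bookkeeping.
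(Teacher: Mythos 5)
Your proposal is correct and follows essentially the same route as the paper: substitute $u = f^\alpha v$, expand $|\nabla u|^2$, integrate the cross term by parts using $\nabla\!\cdot(f^{2\alpha-1}\nabla f) = (2\alpha-1)f^{2\alpha-2}|\nabla f|^2 + f^{2\alpha-1}\Delta f$, and collect coefficients to obtain the GSR potential. The only cosmetic difference is that you write $v^2$ where the paper writes $|v|^2$ (with $\nabla|v|^2$ in the cross term), which covers complex-valued $u$ by the identical manipulation.
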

\begin{proof}
	We have for $u = f^{\alpha}v$ that
	$
		\nabla u = \alpha f^{\alpha-1} (\nabla f) v
		+ f^{\alpha} \nabla v,
	$
	hence
	$$
		|\nabla u|^2 = \alpha^2 f^{2(\alpha-1)} |\nabla f|^2 |v|^2
		+ \alpha f^{2\alpha-1} (\nabla f) \cdot \nabla |v|^2 
		+ f^{2\alpha} |\nabla v|^2.
	$$
	Integrating this expression over $\Omega$, we find that the middle term 
	on the r.h.s. produces after partial integration
	$$
		- \alpha \int_{\Omega} \nabla \cdot (f^{2\alpha-1} \nabla f) |v|^2 \,dx.
	$$
	Now, using that
	$$
		\nabla \cdot (f^{2\alpha-1} \nabla f) 
		= (2\alpha-1) f^{2\alpha-2} |\nabla f|^2 + f^{2\alpha-1} \Delta f,
	$$
	and collecting the terms we arrive at \eqref{GSR}.
\end{proof}

	It will in the sequel
	be very convenient to introduce some terminology related to the
	ground state representation.
	We refer 
	to $f$ as the (exact or approximate) \emph{ground state} 
	and to $\alpha$ as the \emph{GSR weight}, 
	while the potential term arising in \eqref{GSR}, i.e.
	\begin{equation} \label{GSR-potential}
		\alpha(1 - \alpha) \frac{|\nabla f|^2}{f^2} + \alpha\frac{-\Delta f}{f},
	\end{equation}
	will be called the \emph{GSR potential}.
	Note that the choice $\alpha = \frac{1}{2}$ maximizes the first term in the
	GSR potential,
	which would be the relevant term if
	$\Delta f = 0$ on $\Omega$,
	i.e. if $f$ is a generalized zero eigenfunction 
	for the Laplacian on $\Omega$.
	In this case the resulting GSR \eqref{GSR} 
	will usually be called a \emph{Hardy GSR},
	in anticipation of Hardy-type inequalities.
	A general idea of this approach, however, 
	is to try to find as good an inequality as possible by considering
	also \emph{approximate} ground states, 
	say of a particular form convenient for computations, 
	with the possibility to optimize over the GSR weight $\alpha$.
	We further emphasize 
	that an important advantage of having the ground state
	representation for a Hardy inequality
	is that the integral term involving $v$ in \eqref{GSR} 
	provides a guide for proving sharpness (cp. Appendix B), 
	and also opens up
	for further improvements of the inequality.

	We will for the following also find it useful to 
	note that a simple modification of Proposition \ref{prop:GSR}
	to involve a product ground state ansatz $g^\alpha h^\beta$
	(i.e. $u = g^\alpha h^\beta v$)
	produces the GSR potential 
	\begin{equation} \label{product-GSR}
		\alpha(1-\alpha) \frac{|\nabla g|^2}{g^2} + \alpha \frac{-\Delta g}{g}
		+ \beta(1-\beta) \frac{|\nabla h|^2}{h^2} + \beta \frac{-\Delta h}{h}
		- 2\alpha\beta \frac{\nabla g \cdot \nabla h}{gh}.
	\end{equation}

\subsection{The standard Hardy inequalities in $\R^d$}

	The obvious ground states $f$ for the Laplacian in $\R^d$
	are the \emph{fundamental solutions},
	\begin{eqnarray*}
		& f_{d \neq 2}(x) := |x|^{-(d-2)}, & \qquad \Delta_{\R^d} f_d = c_d \delta_0, \\
		& f_2(x) := \ln |x|, & \qquad \Delta_{\R^2} f_2 = c_2 \delta_0,
	\end{eqnarray*}
	where $\delta_0$ are Dirac delta distributions 
	supported at the origin 
	and $c_d$ some irrelevant constants.
	Hence, for $d \neq 2$ we can consider the domain 
	$\Omega := \R^d \setminus \{0\}$ on which $f := f_d > 0$ and $\Delta f = 0$.
	\eqref{GSR} is therefore optimal for $\alpha = \frac{1}{2}$, and
	yields the ground state 
	representation associated to the standard Hardy inequality 
	\eqref{standard-Hardy-ineq}
	in $\mathbb{R}^d$:
	\begin{equation} \label{standard-Hardy}
		\int_\Omega |\nabla u|^2 \,dx - \frac{(d-2)^2}{4} \int_\Omega \frac{|u|^2}{|x|^2} \,dx
		= \int_\Omega |\nabla v|^2 |x|^{-(d-2)} \,dx \ \ge \ 0.
	\end{equation}
	The inequality \eqref{standard-Hardy} 
	holds for all $u \in C_0^\infty(\Omega)$,
	and hence the l.h.s. is non-negative on the Sobolev space $H^1_0(\Omega)$
	($= H^1(\R^d)$ for $d \ge 2$) by closure.
	
	For $d=2$ we can take the domain
	$\Omega := \R^2 \setminus (\{0\} \cup \mathbb{S}^{1})$ 
	and ground state $f:=|f_2|$, so that $f>0$ and $\Delta f = 0$ on $\Omega$.
	\eqref{GSR} then produces the corresponding two-dimensional Hardy GSR
	\begin{equation} \label{standard-Hardy-2d}
		\int_\Omega |\nabla u|^2 \,dx - \frac{1}{4} \int_\Omega \frac{|u|^2}{|x|^2 (\ln |x|)^2} \,dx
		= \int_\Omega |\nabla v|^2 \big| \ln |x| \big| \,dx \ \ge \ 0.
	\end{equation}
	By closure, the l.h.s. is non-negative
	for $u \in H^1_0(\Omega) = H^1_0(\R^2 \setminus \mathbb{S}^1)$.

	In the above we used the standard fact that
	$C_0^\infty(\R^n \setminus \{0\})$ 
	is dense
	in $H^1(\R^n)$
	(with the Sobolev norm)
	for $n \ge 2$,
	while the closure of
	$C_0^\infty(\R \setminus \{0\})$ is 
	$H_0^1(\R \setminus \{0\}) \subsetneq H^1(\R)$.
	See e.g. Lemma 3 in \cite{Lundholm-Solovej:extended}
	for an explicit proof in the case of critical dimension $n=2$.
	Similar density arguments
	are valid for $C_0^\infty(\R^n \setminus K)$ for the codimension $k \ge 2$
	subsets $K$ we consider in the following, 
	typically being finite unions of closed 
	smooth or cone-like submanifolds of dimension $n-k$
	(see e.g. Section 9 in \cite{Mazya:85}).

\subsection{Hardy inequalities outside subspaces}

	Let us also briefly recall 
	the generalizations 
	of the standard Hardy inequalities 
	\eqref{standard-Hardy}-\eqref{standard-Hardy-2d} 
	w.r.t. the point $\{0\}$,
	to corresponding inequalities w.r.t. any linear subspace in $\R^d$.
	For later purposes,
	it is most convenient to state and derive these GSR in the
	language of \emph{geometric algebra} 
	(involving the Clifford algebra over $\mathbb{R}^d$; 
	see \cite{Lundholm-Svensson:09}, 
	or the brief introduction in Appendix A).
	
	Let $A := \ba_1 \wedge \ldots \wedge \ba_p \neq 0$ be a $p$-blade, 
	$0 \le p < d$, i.e. an exterior product of $p$ vectors 
	$\ba_j \in \R^d$,
	representing the oriented $p$-dimensional linear subspace 
	$\bar{A} \subseteq \R^d$ spanned by $\{\ba_j\}$, 
	with magnitude $|A|$.
	Note that the $p+1$-blade 
	$\bx \wedge A = 0$ if and only if $\bx \in \bar{A}$.
	Further,
	$\delta(\bx) := |\bx \wedge A||A|^{-1}$ is the minimal distance from 
	$\bx$ to $\bar{A}$,
	while 
	the Clifford product
	$(\bx \wedge A)A^{-1} = (1 - P_{\bar{A}})\bx$,
	where $P_{\bar{A}}$ is the orthogonal projection on $\bar{A}$.
	We then have the following simple Hardy GSRs, 
	which reduce to \eqref{standard-Hardy} resp. 
	\eqref{standard-Hardy-2d} for $p=0$,
	and which will shortly be generalized to many-particle versions:

\begin{thm}[$d-p \neq 2$] \label{thm:subspace-Hardy}
	Let $\Omega := \{ \bx \in \R^d : | \bx \wedge A | > 0 \}$.
	Taking the ground state 
	$f(\bx) := |\bx \wedge A|^{-(d-p-2)} \propto \delta(\bx)^{-(d-p-2)}$ 
	we obtain
	\begin{equation} \label{subspace-Hardy}
		\int_\Omega |\nabla u|^2 \,dx - \frac{(d-p-2)^2}{4} \int_\Omega \frac{|A|^2}{|\bx \wedge A|^2} |u|^2 \,dx
		= \int_\Omega |\nabla f^{-\frac{1}{2}}u|^2 f \,dx \ \ge \ 0,
	\end{equation}
	for $u \in C_0^\infty(\Omega)$.
	The corresponding Hardy inequality for the l.h.s. holds for
	$u \in H_0^1(\Omega)$ ($= H^1(\R^d)$ for $d-p \ge 2$).
\end{thm}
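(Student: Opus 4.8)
The plan is to apply the ground state representation (Proposition \ref{prop:GSR}) directly, with the exact ground state $f(\bx) = |\bx \wedge A|^{-(d-p-2)}$ and the optimal weight $\alpha = \tfrac{1}{2}$. The whole content of the theorem then reduces to verifying two facts on $\Omega$: that $f$ is harmonic, $\Delta f = 0$, so that the second term of the GSR potential \eqref{GSR-potential} vanishes; and that $\tfrac{1}{4}|\nabla f|^2/f^2 = \tfrac{(d-p-2)^2}{4}\,|A|^2/|\bx \wedge A|^2$, so that the first term reproduces the claimed Hardy potential. With these in hand, \eqref{GSR} becomes exactly \eqref{subspace-Hardy} with the manifestly non-negative remainder $\int_\Omega |\nabla f^{-1/2}u|^2 f\,dx$.

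The key simplification is geometric. Writing $m := d - p \ge 1$ for the codimension and decomposing $\R^d = \bar{A} \oplus \bar{A}^\perp$ orthogonally, the distance $\delta(\bx) = |\bx \wedge A|/|A| = |(1-P_{\bar{A}})\bx|$ depends only on the orthogonal component $\by := (1-P_{\bar{A}})\bx \in \bar{A}^\perp \cong \R^{m}$; explicitly $\delta(\bx) = |\by|$. Hence $f = |\by|^{-(m-2)}$ is, up to the constant factor $|A|^{m-2}$, precisely the fundamental solution of the Laplacian in the $m$-dimensional complement. Since this function is harmonic on $\R^m \setminus \{0\}$ and is independent of the $\bar{A}$-coordinates, its full Laplacian $\Delta_{\R^d} f$ vanishes on $\Omega = \{\by \neq 0\}$ for every $m \neq 2$, which disposes of the $-\Delta f/f$ term.

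For the gradient term I would compute in the $\by$-variable: $\nabla f = -(m-2)|\by|^{-m}\by$, whence $|\nabla f|^2/f^2 = (m-2)^2 |\by|^{-2} = (m-2)^2 \delta^{-2} = (d-p-2)^2\,|A|^2/|\bx \wedge A|^2$. With $\alpha = \tfrac{1}{2}$ the factor $\alpha(1-\alpha) = \tfrac{1}{4}$ then yields precisely the constant $\tfrac{(d-p-2)^2}{4}$, completing the identification of \eqref{GSR} with \eqref{subspace-Hardy}. Finally the passage from $u \in C_0^\infty(\Omega)$ to $u \in H_0^1(\Omega)$ is by closure, and when $m = d-p \ge 2$ the subspace $\bar{A}$ has codimension at least two, so $H_0^1(\Omega) = H^1(\R^d)$ by the density remark above.

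I expect no serious obstacle: once the orthogonal splitting identifies $f$ with a fundamental solution, both the harmonicity and the gradient identity are immediate, and the computation is uniform across all admissible $m$ (note the hyperplane case $m=1$ gives $f = \delta$ with constant $\tfrac14$). The only point requiring a little care is the interplay of the Clifford-algebraic expressions — confirming $|\bx \wedge A|/|A| = |(1-P_{\bar{A}})\bx|$ and that $f$ genuinely depends on $\bx$ only through $\by$ — but these follow directly from the multilinear-algebra identities recorded just before the statement.
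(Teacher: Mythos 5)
Your proposal is correct and follows essentially the same route as the paper's own proof: choose coordinates adapted to the orthogonal splitting $\R^d = \bar{A} \oplus \bar{A}^\perp$, recognize $f \propto \delta^{-(d-p-2)}$ as the fundamental solution of the Laplacian transverse to $\bar{A}$ (hence $\Delta f = 0$ on $\Omega$ with $|\nabla f|^2/f^2 = (d-p-2)^2/\delta^2$), and feed this into Proposition \ref{prop:GSR} with the optimal weight $\alpha = \tfrac{1}{2}$. The paper additionally sketches a coordinate-free variant via geometric algebra ($\nabla f = -(d-p-2)(\bx \wedge A)^{-1}A^\dagger f$, Appendix A), but your coordinate computation is precisely its primary argument, with the density and codimension points handled the same way.
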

\begin{proof}
	By choosing a basis and coordinate system appropriately, 
	one easily computes that
	$\Delta \delta^{-(d-p-2)} = 0$ on $\Omega$
	($f$ is a fundamental solution to the Laplacian w.r.t. the subspace $\bar{A}$).
	Furthermore, $\nabla f/f = -(d-p-2)\nabla \delta/\delta$, 
	and it follows that the optimal weight is the standard Hardy 
	$\alpha = \frac{1}{2}$, with $|\nabla f|^2/f^2 = (d-p-2)^2/\delta^2$.

	Alternatively, by employing geometric algebra we can avoid
	introducing coordinates and directly obtain
	$\nabla f = -(d-p-2)(\bx \wedge A)^{-1}A^\dagger f$
	and $\Delta f = 0$
	(see Appendix A).
\end{proof}

\begin{thm}[$d-p = 2$] \label{thm:subspace-Hardy-2d}
	Fix a length scale $R > 0$ and consider
	$\Omega := \{ \bx \in \R^d : 0 < | \bx \wedge A |/R \neq 1 \}$.
	Taking $f(\bx) := \big| \ln \frac{1}{R}|\bx \wedge A| \big|$ we obtain
	\begin{equation} \label{subspace-Hardy-2d}
		\int_\Omega |\nabla u|^2 \,dx - \frac{1}{4} \int_\Omega 
			\frac{|A|^2}{|\bx \wedge A|^2 (\ln \frac{1}{R}|\bx \wedge A|)^2} |u|^2 \,dx
		= \int_\Omega |\nabla f^{-\frac{1}{2}}u|^2 f \,dx \ \ge \ 0,
	\end{equation}
	for $u \in C_0^\infty(\Omega)$.
\end{thm}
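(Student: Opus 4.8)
The plan is to apply the ground state representation of Proposition~\ref{prop:GSR} with the stated $f$ and the Hardy weight $\alpha=\frac12$, exactly as in the proof of Theorem~\ref{thm:subspace-Hardy}. The two ingredients I need are that $\Delta f=0$ on $\Omega$ and that $|\nabla f|^2/f^2$ equals the coefficient $|A|^2/\big(|\bx\wedge A|^2(\ln\frac1R|\bx\wedge A|)^2\big)$ appearing in \eqref{subspace-Hardy-2d}. Once these hold, the GSR potential \eqref{GSR-potential} collapses to $\frac14\,|\nabla f|^2/f^2$, and the identity follows by substituting $v=f^{-1/2}u$ into \eqref{GSR}.

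First I would reduce to the two-dimensional transverse picture. Writing $\delta(\bx)=|\bx\wedge A|\,|A|^{-1}$ for the distance to $\bar A$ and $\bx_\perp=(1-P_{\bar A})\bx=(\bx\wedge A)A^{-1}$ for the component orthogonal to $\bar A$, the function $\delta$ depends only on $\bx_\perp$, which ranges over the orthogonal complement $\bar A^\perp$. Since $d-p=2$, on $\bar A^\perp\cong\R^2$ the quantity $\delta$ plays the role of the planar radial coordinate, so that $\ln\delta$ is, up to an additive constant, the two-dimensional fundamental solution and is therefore harmonic away from $\bx_\perp=0$. As $\ln\frac1R|\bx\wedge A|=\ln\delta+\ln\frac{|A|}{R}$ differs from $\ln\delta$ by a constant, it is harmonic on all of $\Omega$; and $\Omega$ has been defined to exclude the set $|\bx\wedge A|=R$ precisely so that $f=\big|\ln\frac1R|\bx\wedge A|\big|$ remains smooth and strictly positive, giving $\Delta f=\pm\,\Delta\ln\frac1R|\bx\wedge A|=0$ there.

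Next I would compute the gradient. Since $\delta$ is a distance function one has $|\nabla\delta|=1$ with $\nabla\delta=\bx_\perp/\delta$, whence $\nabla f=\pm\nabla\ln\delta=\pm\nabla\delta/\delta$ and $|\nabla f|^2=1/\delta^2=|A|^2/|\bx\wedge A|^2$. Dividing by $f^2=(\ln\frac1R|\bx\wedge A|)^2$ reproduces exactly the coefficient in \eqref{subspace-Hardy-2d}. Alternatively, the geometric-algebra identities of Appendix~A supply $\nabla\ln|\bx\wedge A|=(\bx\wedge A)^{-1}A^\dagger$ and $\Delta\ln|\bx\wedge A|=0$ directly, bypassing any choice of coordinates as in the non-critical case.

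The only delicate point is the behaviour across the hypersurface $|\bx\wedge A|=R$, where $\ln\frac1R|\bx\wedge A|$ changes sign and $f$ acquires a corner. This is handled exactly as in the flat two-dimensional case \eqref{standard-Hardy-2d}: removing this set leaves $f$ positive and twice differentiable on each of the two components $\{|\bx\wedge A|<R\}$ and $\{|\bx\wedge A|>R\}$, so Proposition~\ref{prop:GSR} applies verbatim on each, and since the sign factor $\pm$ cancels in both $|\nabla f|^2$ and $f^2$ the resulting potential is identical on the two pieces. I expect no further obstacle, the computation being a constant-shifted transverse copy of the standard two-dimensional Hardy GSR.
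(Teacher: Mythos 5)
Your proposal is correct and takes essentially the same approach as the paper: apply Proposition~\ref{prop:GSR} with the Hardy weight $\alpha=\frac12$ after checking $\Delta f=0$ on $\Omega$ and $|\nabla f|^2/f^2=|A|^2/\bigl(|\bx\wedge A|^2(\ln\frac{1}{R}|\bx\wedge A|)^2\bigr)$, with the sign ambiguity from the two components of $\Omega$ cancelling in both quantities. The paper's one-line proof invokes exactly the Appendix~A identities $\nabla f=\pm(\bx\wedge A)^{-1}A$ and $\Delta f=\pm(d-p-2)|A|^2|\bx\wedge A|^{-2}=0$, which you cite as your alternative to the (equally valid) transverse two-dimensional coordinate reduction.
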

\begin{proof}
	Here we have
	$\nabla f = \pm(\bx \wedge A)^{-1}A$ 
	(with a sign depending on $p$ and which part of $\Omega$ we consider)
	and
	$\Delta f = \pm (d-p-2)|A|^2|\bx \wedge A|^{-2} = 0$
	(see Appendix A).
\end{proof}

	Due to the invariance of the Laplacian under translations,
	corresponding GSR of course hold also for affine subspaces
	$\ba + \bar{A}$, $\ba \in \R^d$,
	simply by translation $\bx \mapsto \bx + \ba$
	and considering
	$\Omega := \{ |(\bx - \ba) \wedge A| > 0 \}$
	(analogously for $d-p=2$).
	Furthermore, we note that the constants in \eqref{subspace-Hardy}
	are sharp, just as for $p=0$ (cp. Appendix B).

\section{Many-particle Hardy inequalities}
	\label{sec:many-particle}

	We now turn to a 
	systematic application of the GSR 
	with approximate ground states
	to the setting of many-particle Hardy inequalities.

\subsection{Conventional many-particle inequalities}

	Consider a tuple $(\bx_1,\ldots,\bx_N)$ of $N$ points, or \emph{particles}, 
	in $\mathbb{R}^d$.
	We define the distance $r_{ij} := |\bx_i - \bx_j|$ between two particles,
	and the \emph{circumradius} $R_{ijk}$ 
	associated to three non-coincident particles,
	$$
		\frac{1}{2R_{ijk}^2} := \sum_{\textrm{cyclic in $i,j,k$}} (\bx_i - \bx_j)^{-1} \cdot (\bx_i - \bx_k)^{-1},
	$$
	i.e. the radius of the circle that the particles 
	$\bx_i$, $\bx_j$, $\bx_k$ inscribe
	($R_{ijk} := \infty$ for collinear particles,
	for which the r.h.s. is zero; 
	cp. Lemma 3.2 in \cite{H-O2-Laptev-Tidblom:08}).
	Let us first consider the total separation measured by the
	distance-squared between all pairs of particles:

\begin{thm}[Total separation of $N\ge 2$ particles] 
	\label{thm:many-particle-separation}
	Let
	$$
		\Omega := \R^{dN} \setminus \{ \bx_1 = \bx_2 = \ldots = \bx_N \}.
	$$
	Taking the ground state $\rho(x)^2 := \sum_{i<j} |\bx_i - \bx_j|^{2}$ we obtain
	\begin{equation} \label{many-particle-separation}
		\int_\Omega |\nabla u|^2 \,dx
		- N \left( \frac{N-1}{2}d - 1 \right)^2 \int_\Omega 
			\frac{|u|^2}{\rho^2} \,dx
		= \int_\Omega |\nabla \rho^{-2\alpha} u|^2 \rho^{4\alpha} \,dx \ge 0,
	\end{equation}
	for all $u \in C_0^\infty(\Omega)$,
	with the optimal weight $\alpha := -\frac{(N-1)d - 2}{4}$.
\end{thm}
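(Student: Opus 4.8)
The plan is to apply the ground state representation of Proposition~\ref{prop:GSR} directly with the approximate ground state $f := \rho^2 = \sum_{i<j}|\bx_i - \bx_j|^2$ and to optimize over the weight $\alpha$. First I note that $f$ is a polynomial in the $dN$ coordinates, hence smooth, and that $f = 0$ exactly on the total coincidence set $\{\bx_1 = \cdots = \bx_N\}$ removed from $\Omega$; thus $f > 0$ and twice differentiable on $\Omega$, so \eqref{GSR} applies for every $\alpha \in \R$. With this choice $v = f^{-\alpha}u = \rho^{-2\alpha}u$ and $f^{2\alpha} = \rho^{4\alpha}$, so the remainder term $\int_\Omega |\nabla v|^2 f^{2\alpha}\,dx$ is exactly the right-hand side of \eqref{many-particle-separation}, which is manifestly non-negative. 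It therefore only remains to evaluate the GSR potential \eqref{GSR-potential} and to choose $\alpha$ so that its coefficient equals the stated constant.

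Next I compute the two ingredients of the potential. Differentiating with respect to the block $\bx_k$ gives $\nabla_{\bx_k} f = 2\sum_{j \neq k}(\bx_k - \bx_j) = 2N(\bx_k - \bar{\bx})$, where $\bar{\bx} := \frac{1}{N}\sum_j \bx_j$ is the centre of mass, while each summand $|\bx_i - \bx_j|^2$ contributes $4d$ to the full Laplacian, so that $\Delta f = 4d\binom{N}{2} = 2dN(N-1)$. The crucial point is the elementary identity $\sum_{i<j}|\bx_i - \bx_j|^2 = N\sum_k |\bx_k - \bar{\bx}|^2$, which yields $|\nabla f|^2 = 4N^2\sum_k|\bx_k - \bar{\bx}|^2 = 4N f$. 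Hence both contributions are pure inverse-square potentials in $\rho$:
\begin{equation*}
	\frac{|\nabla f|^2}{f^2} = \frac{4N}{\rho^2}, \qquad \frac{-\Delta f}{f} = \frac{-2dN(N-1)}{\rho^2}.
\end{equation*}

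Substituting these into \eqref{GSR-potential} reduces the whole potential to $\frac{N}{\rho^2}\big(4\alpha(1-\alpha) - 2d(N-1)\alpha\big)$, a single downward-opening quadratic in $\alpha$. Maximizing it produces the critical value $\alpha = -\frac{(N-1)d - 2}{4}$ announced in the statement, at which the bracket equals $\big(\frac{N-1}{2}d - 1\big)^2$; multiplying by $N/\rho^2$ gives precisely the constant $N\big(\frac{N-1}{2}d - 1\big)^2$ in front of $\int_\Omega |u|^2/\rho^2\,dx$, and \eqref{many-particle-separation} follows. I expect no serious obstacle here: the only real content is the recognition that, although $\rho^2$ is merely an \emph{approximate} ground state (it is not harmonic, $\Delta f \neq 0$), the centre-of-mass identity forces both terms of the GSR potential to be proportional to $\rho^{-2}$, so the inverse-square structure is preserved and the whole problem collapses to the scalar optimization in $\alpha$.
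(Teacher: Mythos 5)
Your proposal is correct and follows essentially the same route as the paper: apply Proposition~\ref{prop:GSR} with $f=\rho^2$, compute $\Delta f = 2dN(N-1)$ and $|\nabla f|^2 = 4N\rho^2$, and maximize the resulting quadratic in $\alpha$ to land on $\alpha = -\frac{(N-1)d-2}{4}$ and the constant $N\big(\frac{N-1}{2}d-1\big)^2$. The only (cosmetic) difference is that you derive $|\nabla f|^2 = 4N\rho^2$ via $\nabla_{\bx_k}f = 2N(\bx_k-\bar{\bx})$ and the centre-of-mass identity $\rho^2 = N\sum_k|\bx_k-\bar{\bx}|^2$, whereas the paper expands the square directly and invokes the identity \eqref{particle-sum-identity}.
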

\noindent
	This gives a generalization to $N \neq 3$ 
	of (3.5) in \cite{H-O2-Laptev-Tidblom:08}.
	Note that $\codim \Omega^c = dN-d = d(N-1)$
	and hence that the corresponding 
	Hardy inequality
	on $H^1_0(\Omega)$ holds on $H^1(\R^{dN})$
	unless $d=1$ and $N=2$.
	The constant is sharp, as shown explicitly in the appendix
	(Proposition \ref{prop:sharpness-mp-separation}).
\begin{proof}
	One computes
	$$
		\nabla_k \rho^2 = 2 \sum_{j \neq k} (\bx_k - \bx_j),
	$$
	hence
	$$
		\Delta \rho^2 = 2 \sum_k \sum_{j \neq k} \nabla_k \cdot (\bx_k - \bx_j) = 2N(N-1)d,
	$$
	and
	$$
		|\nabla \rho^2|^2 = 8 \sum_{i<j} r_{ij}^2
			+ 8 \sum_{k} \sum_{i<j} (\bx_k - \bx_i) \cdot (\bx_k - \bx_j)
			= 4N\rho^2,
	$$
	where in the last step we used the identity
	\begin{equation} \label{particle-sum-identity}
		\sum_{k} \sum_{i<j} (\bx_k - \bx_i) \cdot (\bx_k - \bx_j) 
			= \frac{N-2}{2} \sum_{i<j} |\bx_i - \bx_j|^2.
	\end{equation}
	It follows that we have a GSR potential
	$$
		\alpha(1-\alpha) \frac{|\nabla \rho^2|^2}{\rho^4} - \alpha \frac{\Delta \rho^2}{\rho^2}
		= 4N \alpha\left( \frac{2 - (N-1)d}{2} - \alpha \right) \frac{1}{\rho^2},
	$$
	which by optimization proves the theorem.
\end{proof}

	Next, we have as a special case of the following, 
	the so-called
	`standard' many-particle Hardy inequality:

\begin{thm}[Separation and circumradii of pairs and triples of particles in $d \ge 3$] 
	\label{thm:many-particle-Hardy-3d}
	Let 
	\begin{equation} \label{omega-minus-diagonals}
		\Omega := \{ (\bx_1,\ldots,\bx_N) \in \R^{dN} : \bx_i \neq \bx_j \ \forall i \neq j \}.
	\end{equation}
	Taking the ground state $f(x) := \prod_{j<k} |\bx_j - \bx_k|^{-(d-2)}$ we obtain
	\begin{equation}
	\begin{split}
		\int_\Omega |\nabla u|^2 \,dx
		- (d-2)^2 \int_\Omega \left( 
			2\alpha(1-\alpha) \sum_{i<j} \frac{1}{r_{ij}^2} 
			- \alpha^2 \sum_{i<j<k} \frac{1}{R_{ijk}^2} 
			\right) |u|^2 \,dx
			\\
		= \int_\Omega |\nabla f^{-\alpha} u|^2 f^{2\alpha} \,dx  \ \ge \ 0,
	\end{split}
	\label{many-particle-Hardy-3d-alpha}
	\end{equation}
	for all $u \in C_0^\infty(\Omega)$ and $\alpha \in \R$.
	In particular, defining 
	\begin{equation} \label{K-bound}
		K_{d,N} := \sup_{x \in \Omega} 
			\frac{ \sum_{i<j<k} 1/R_{ijk}^2 }{ \sum_{i<j} 1/r_{ij}^2 }
			\quad \le N-2 < \infty,
	\end{equation}
	(the upper bound following from geometric relations; 
	cp. Lemma 3.3 in \cite{H-O2-Laptev-Tidblom:08}),
	and taking the then optimal weight $\alpha := \frac{1}{2 + K_{d,N}}$ we have
	\begin{equation} \label{many-particle-Hardy-3d}
		\int_\Omega |\nabla u|^2 \,dx
		- \frac{(d-2)^2}{2 + K_{d,N}} \sum_{i<j} \int_\Omega 
			\frac{|u|^2}{r_{ij}^2} \,dx
		\ge \int_\Omega |\nabla f^{-\alpha} u|^2 f^{2\alpha} \,dx  \ \ge \ 0.
	\end{equation}
	On the other hand, assuming $\alpha(1-\alpha) \ge 0$ in 
	\eqref{many-particle-Hardy-3d-alpha}
	and now using the bound \eqref{K-bound} on the term involving $r_{ij}$
	we obtain
	\begin{equation} \label{many-particle-Hardy-3d-R}
		\int_\Omega |\nabla u|^2 \,dx
		- \frac{(d-2)^2}{K_{d,N}(2 + K_{d,N})} \sum_{i<j<k} \int_\Omega 
			\frac{|u|^2}{R_{ijk}^2} \,dx
		\ge \int_\Omega |\nabla f^{-\alpha} u|^2 f^{2\alpha} \,dx  \ \ge \ 0,
	\end{equation}
	again with the optimal weight $\alpha := \frac{1}{2 + K_{d,N}}$.
\end{thm}

	As pointed out in \cite{H-O2-Laptev-Tidblom:08}, 
	using the geometric relations between separation and circumradii,
	the $N=3$ case of Theorem \ref{thm:many-particle-separation} also implies
	\begin{equation} \label{many-particle-Hardy-3d-3}
		\int_\Omega |\nabla u|^2 \,dx
		\ge \frac{(d-1)^2}{3} \binom{N}{3}^{-1} \frac{N}{3} \sum_{i<j<k} \int_\Omega 
			\frac{|u|^2}{R_{ijk}^2} \,dx.
	\end{equation}
	Combining this with \eqref{many-particle-Hardy-3d-alpha}, 
	one is led to maximize $\frac{\alpha(1-\alpha)}{1 + c\alpha^2}$
	with $$c := \frac{3}{2}\frac{(d-2)^2}{(d-1)^2}(N-1)(N-2).$$
	This results in
	\begin{equation} \label{many-particle-Hardy-3d-c}
		\int_\Omega |\nabla u|^2 \,dx
		- \alpha (d-2)^2
		\sum_{i<j} \int_\Omega \frac{|u|^2}{r_{ij}^2} \,dx
		\ge \frac{1}{1 + c\alpha^2} \int_\Omega |\nabla f^{-\alpha} u|^2 f^{2\alpha} \,dx  \ \ge \ 0,
	\end{equation}
	with the optimal $\alpha := (1 + \sqrt{1+c})^{-1}$.
	The Hardy inequalities corresponding to 
	\eqref{many-particle-Hardy-3d} and \eqref{many-particle-Hardy-3d-c}
	were given in the form of
	Theorem 2.1, (4.9), (4.11) in \cite{H-O2-Laptev-Tidblom:08}.

	We also note that
	all corresponding Hardy inequalities from
	\eqref{many-particle-Hardy-3d}--\eqref{many-particle-Hardy-3d-c}
	hold on the full space of functions $H^1(\R^{dN})$
	since $\codim \Omega^c = dN - d - (N-2)d = d \ge 3$.
	For $N=2$ we simply have $\alpha = \frac{1}{2}$ and
	\eqref{many-particle-Hardy-3d} and \eqref{many-particle-Hardy-3d-c}
	reduce to \eqref{many-particle-separation}
	with the sharp constant $(d-2)^2/2$.
	It was also noted in \cite{H-O2-Laptev-Tidblom:08}
	that the large-$N$ behavior of the constant in
	\eqref{many-particle-Hardy-3d}/\eqref{many-particle-Hardy-3d-c}
	with $K_{d,N} \sim N$ 
	cannot be improved.

\begin{proof}
	One computes
	$$
		\nabla_k f = -(d-2)f \sum_{j \neq k} (\bx_k - \bx_j)^{-1},
	$$
	hence
	$$
		|\nabla f|^2 = (d-2)^2 f^2 \left( 2\sum_{i<j} \frac{1}{r_{ij}^2} + \sum_{i<j<k} \frac{1}{R_{ijk}^2} \right),
	$$
	and, due to $\Delta_{\bx_k} |\bx_k - \bx_j|^{-(d-2)} = 0$ on $\Omega$,
	$$
		\Delta f = (d-2)^2 f \sum_{i<j<k} \frac{1}{R_{ijk}^2}.
	$$
	This gives the GSR \eqref{many-particle-Hardy-3d-alpha} in the theorem.
	Bounding the (in total positive) term involving $R_{ijk}$ in that equation
	by the term involving $r_{ij}$ by means of $K_{d,N}$ in \eqref{K-bound}, 
	we obtain the total constant in \eqref{many-particle-Hardy-3d}
	$$
		(d-2)^2\alpha(2(1 - \alpha) - \alpha K_{d,N}) 
		= (d-2)^2(2 + K_{d,N}) \, \alpha\left( \frac{2}{2+K_{d,N}} - \alpha \right),
	$$
	which is optimal for $\alpha = (2+K_{d,N})^{-1}$.
	\eqref{many-particle-Hardy-3d-R} follows similarly.
\end{proof}

	The one-dimensional case, on the other hand, 
	is much simpler due to collinearity of the particles:

\begin{thm}[Separation of pairs of particles in $d=1$] 
	\label{thm:many-particle-Hardy-1d}
	Let
	$$
		\Omega := \{ (x_1,\ldots,x_N) \in \R^N : x_i \neq x_j \ \forall i \neq j \}.
	$$
	Taking the ground state $f(x) := \prod_{j<k} |x_j - x_k|$ we obtain
	\begin{equation} \label{many-particle-Hardy-1d}
		\int_\Omega |\nabla u|^2 \,dx
		- \frac{1}{2} \int_\Omega \left( \sum_{i<j} \frac{1}{r_{ij}^2} \right) |u|^2 \,dx
		= \int_\Omega |\nabla f^{-\frac{1}{2}} u|^2 \prod_{i<j} r_{ij} \,dx  \ \ge \ 0,
	\end{equation}
	for all $u \in C_0^\infty(\Omega)$.
\end{thm}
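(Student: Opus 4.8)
The plan is to apply the ground state representation of Proposition~\ref{prop:GSR} with the stated $f=\prod_{j<k}|x_j-x_k|$ and the weight $\alpha=\tfrac12$. This choice is natural because each factor $|x_j-x_k|$ is, up to sign, the one-dimensional fundamental solution of the Laplacian, so one expects $f$ to be \emph{exactly} harmonic on $\Omega$ and $\alpha=\tfrac12$ to be the optimal (Hardy) weight. Following the structure of the proof of Theorem~\ref{thm:many-particle-Hardy-3d}, I would first record the logarithmic gradient
\[
	\partial_m \log f = \sum_{j\neq m}\frac{1}{x_m-x_j},
\]
and then square and sum over $m$ to split $|\nabla f|^2/f^2$ into a diagonal part, which gives $2\sum_{i<j}r_{ij}^{-2}$, and an off-diagonal triple sum $\sum_m\sum_{j,\ell\neq m,\,j\neq\ell}[(x_m-x_j)(x_m-x_\ell)]^{-1}$.

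The crux, and the reason the one-dimensional case simplifies, is the elementary partial-fractions identity
\[
	\frac{1}{(a-b)(a-c)}+\frac{1}{(b-a)(b-c)}+\frac{1}{(c-a)(c-b)}=0
\]
valid for any distinct reals $a,b,c$. Regrouping the off-diagonal sum by unordered triples $\{j,\ell,m\}$ and letting each index in turn play the role of the centre $m$, every triple contributes exactly this vanishing cyclic sum, so the whole off-diagonal part is zero. This is precisely the analytic manifestation of collinearity: in one dimension all circumradii $R_{ijk}$ are infinite, so the $\sum 1/R_{ijk}^2$ terms that appear in \eqref{many-particle-Hardy-3d-alpha} are simply absent, leaving $|\nabla f|^2/f^2 = 2\sum_{i<j}r_{ij}^{-2}$.

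The same identity disposes of the Laplacian. A second differentiation gives $\partial_m^2 f = f\bigl[(\partial_m\log f)^2-\sum_{j\neq m}(x_m-x_j)^{-2}\bigr]$, in which the squared-gradient diagonal cancels the second-derivative term, leaving $\partial_m^2 f = f\sum_{j,\ell\neq m,\,j\neq\ell}[(x_m-x_j)(x_m-x_\ell)]^{-1}$; summing over $m$ reproduces the same triple sum, so $\Delta f=0$ on $\Omega$. Hence $f$ is an exact ground state and $\alpha=\tfrac12$ is indeed optimal. Substituting both computations into \eqref{GSR} collapses the GSR potential to $\tfrac12\cdot\tfrac12\cdot 2\sum_{i<j}r_{ij}^{-2}=\tfrac12\sum_{i<j}r_{ij}^{-2}$, while the remainder term becomes $\int_\Omega|\nabla f^{-1/2}u|^2 f\,dx$ with $f=\prod_{i<j}r_{ij}$, which is exactly \eqref{many-particle-Hardy-1d}. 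I expect the only real work to be the bookkeeping in the two triple-sum cancellations; everything else is a direct specialization of the higher-dimensional computation, carried out on the domain $\Omega$ where the points are pairwise distinct, so that $f>0$, all denominators are nonzero, and no boundary terms arise from the partial integration for $u\in C_0^\infty(\Omega)$.
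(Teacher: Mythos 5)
Your proposal is correct and is essentially the paper's proof, only written out in full: the paper's one-line argument specializes the computation of Theorem~\ref{thm:many-particle-Hardy-3d} to $d=1$, noting $R_{ijk}=\infty$ and $\Delta f=0$ so that $\alpha=\tfrac12$ is optimal, and your partial-fractions identity is exactly the statement that the cyclic sum defining $\tfrac{1}{2R_{ijk}^2}$ vanishes for collinear points. The only difference is that you make this cancellation self-contained rather than citing the higher-dimensional computation, which is a matter of presentation, not of method.
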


\begin{proof}
	In this case $R_{ijk} = \infty$ and $\Delta f = 0$ on $\Omega$.
	Hence $\alpha = \frac{1}{2}$ optimizes the GSR.
\end{proof}

	This is a GSR version of \eqref{standard-1D-many-particle} 
	and Theorem 2.5 in \cite{H-O2-Laptev-Tidblom:08}.
	The corresponding inequality holds for all
	$u \in H^1_0(\Omega)$ 
	(note that $\codim \Omega^c = 1$ in this case)
	and is sharp.
	This lower bound \eqref{many-particle-Hardy-1d} 
	(as well as a corresponding identity for $\alpha > 1/2$)
	plays an important role for operators appearing in 
	the Calogero-Sutherland models 
	\cite{Calogero:69,Calogero:71,Sutherland:71}, 
	and for a model of identical particles in one dimension 
	with generalized statistics 
	\cite{Polychronakos:89,Polychronakos:92,Lundholm-Solovej:extended}.

	For the two-dimensional case we fix a length scale $R>0$ and define
	$$
	\tilde{r}_{ij} := |\bx_i - \bx_j| \big| \ln \frac{1}{R}|\bx_i - \bx_j| \big|,
	$$
	and $\tilde{R}_{ijk}$ by
	\begin{equation} \label{def-R-tilde}
		\frac{1}{2\tilde{R}_{ijk}^2} := \sum_{\textrm{cycl}} 
			\frac{(\bx_i - \bx_j)^{-1}}{\big| \ln \frac{1}{R}|\bx_i - \bx_j| \big|} \cdot 
			\frac{(\bx_i - \bx_k)^{-1}}{\big| \ln \frac{1}{R}|\bx_i - \bx_k| \big|}.
	\end{equation}

\begin{thm}[Separation of pairs of particles in $d=2$] 
	\label{thm:many-particle-Hardy-2d}
	Let
	$$
		\Omega := \{ (\bx_1,\ldots,\bx_N) \in \R^{2N} : \bx_i \neq \bx_j \ \forall i \neq j \} \cap (B_{R/2}(0))^N.
	$$
	Taking the ground state 
	$f(x) := \prod_{i<j} \big| \ln \frac{1}{R} |\bx_i - \bx_j| \big|$ we obtain
	\begin{eqnarray*}
		\int_\Omega |\nabla u|^2 \,dx
		- \int_\Omega \left( 
			2\alpha(1-\alpha) \sum_{i<j} \frac{1}{\tilde{r}_{ij}^2} 
			- \alpha^2 \sum_{i<j<k} \frac{1}{\tilde{R}_{ijk}^2} 
			\right) |u|^2 \,dx
			\\
		= \int_\Omega |\nabla f^{-\alpha} u|^2 f^{2\alpha} \,dx  \ \ge \ 0,
	\end{eqnarray*}
	for all $u \in C_0^\infty(\Omega)$.
	Hence, if 
	$K_{2,N} := \sup_{x \in \Omega} \sum \tilde{R}^{-2}_{ijk} / \sum \tilde{r}^{-2}_{ij}$,
	then
	\begin{eqnarray} \label{many-particle-Hardy-2d}
		\int_\Omega |\nabla u|^2 \,dx
		- \frac{1}{2 + K_{2,N}} \sum_{i<j} \int_\Omega 
			\frac{|u|^2}{\tilde{r}_{ij}^2} 
			\,dx
		\ge \int_\Omega |\nabla f^{-\alpha} u|^2 f^{2\alpha} \,dx  \ \ge \ 0,
	\end{eqnarray}
	and
	\begin{equation} \label{many-particle-Hardy-2d-R}
		\int_\Omega |\nabla u|^2 \,dx
		- \frac{1}{K_{2,N}(2 + K_{2,N})} \sum_{i<j<k} \int_\Omega 
			\frac{|u|^2}{\tilde{R}_{ijk}^2} \,dx
		\ge \int_\Omega |\nabla f^{-\alpha} u|^2 f^{2\alpha} \,dx  \ \ge \ 0,
	\end{equation}
	with $\alpha := \frac{1}{2 + K_{2,N}}$.
\end{thm}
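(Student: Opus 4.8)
The plan is to follow the derivation of the $d \ge 3$ case in Theorem~\ref{thm:many-particle-Hardy-3d} essentially verbatim, the only change being that the power-law two-body ground state is replaced by the logarithmic one dictated by the critical dimension $d=2$. Concretely, I would insert the product ground state $f = \prod_{i<j}|g_{ij}|$, with $g_{ij} := \ln\frac{1}{R}|\bx_i - \bx_j|$, into the GSR~\eqref{GSR} of Proposition~\ref{prop:GSR}, and reduce everything to the computation of $\nabla f$ and $\Delta f$ on $\Omega$.

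First I would record the sign normalisation that the domain is designed to provide: restricting to $(B_{R/2}(0))^N$ forces $|\bx_i-\bx_j| < R$ and hence $g_{ij} < 0$, so that $|g_{ij}| = -g_{ij}$ with a fixed sign and $f>0$ is smooth on $\Omega$. Logarithmic differentiation then gives, for each particle $k$,
\[
  \nabla_k f = -f\sum_{j\neq k}\frac{(\bx_k-\bx_j)^{-1}}{|g_{kj}|},
\]
using $\nabla_k g_{kj} = (\bx_k-\bx_j)^{-1}$ in the geometric-algebra notation. Squaring and splitting the resulting double sum into diagonal and off-diagonal parts reproduces exactly the two-body and three-body structures: the diagonal contributes $\sum_{j\neq k}\tilde{r}_{kj}^{-2}$, and summing over $k$ yields $2\sum_{i<j}\tilde{r}_{ij}^{-2}$, while the off-diagonal cross terms assemble, through the cyclic definition~\eqref{def-R-tilde}, into $\sum_{i<j<k}\tilde{R}_{ijk}^{-2}$ (the factor-of-two bookkeeping being identical to the $d\ge3$ case). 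Thus $|\nabla f|^2 = f^2\bigl(2\sum_{i<j}\tilde{r}_{ij}^{-2} + \sum_{i<j<k}\tilde{R}_{ijk}^{-2}\bigr)$.

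The one genuinely new point---and the step I expect to be the main obstacle---is the Laplacian. Here I would invoke the critical-dimension harmonicity already established in Theorem~\ref{thm:subspace-Hardy-2d}: for $d-p=2$ the factor $(d-p-2)$ vanishes, so each two-body factor satisfies $\Delta_k|g_{kj}| = 0$ on $\Omega$. Consequently, writing $\Delta_k f = \bigl(\Delta_k\prod_{j\neq k}|g_{kj}|\bigr)\prod_{\text{rest}}$, all single-factor second-derivative terms drop out, the diagonal pieces of the remaining square cancel against the $|\nabla_k|g_{kj}||^2/|g_{kj}|^2$ terms, and only the off-diagonal cross terms survive, giving $\Delta f = f\sum_{i<j<k}\tilde{R}_{ijk}^{-2}$. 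Substituting these two identities into~\eqref{GSR} yields the GSR potential $2\alpha(1-\alpha)\sum_{i<j}\tilde{r}_{ij}^{-2} - \alpha^2\sum_{i<j<k}\tilde{R}_{ijk}^{-2}$ (since $\alpha(1-\alpha)-\alpha = -\alpha^2$), which is the stated identity.

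Finally, to extract~\eqref{many-particle-Hardy-2d} and~\eqref{many-particle-Hardy-2d-R} I would argue exactly as in Theorem~\ref{thm:many-particle-Hardy-3d}. Bounding the negatively weighted three-body sum via $\sum\tilde{R}_{ijk}^{-2}\le K_{2,N}\sum\tilde{r}_{ij}^{-2}$ turns the coefficient into $\alpha\bigl(2 - \alpha(2+K_{2,N})\bigr)\sum\tilde{r}_{ij}^{-2}$, which is maximised at $\alpha = (2+K_{2,N})^{-1}$ with value $(2+K_{2,N})^{-1}$, giving~\eqref{many-particle-Hardy-2d}; bounding instead $\sum\tilde{r}_{ij}^{-2}\ge K_{2,N}^{-1}\sum\tilde{R}_{ijk}^{-2}$ (legitimate since $\alpha(1-\alpha)\ge0$ for this $\alpha\in(0,1)$) gives the constant $K_{2,N}^{-1}(2+K_{2,N})^{-1}$ in~\eqref{many-particle-Hardy-2d-R}. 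I would flag that, unlike the $d\ge3$ case where Lemma~3.3 of~\cite{H-O2-Laptev-Tidblom:08} supplies the bound $K_{d,N}\le N-2$, here the finiteness of $K_{2,N}$ is not asserted and the inequalities are stated conditionally on its definition; the substantive content is the GSR identity itself.
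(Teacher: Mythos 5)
Your proposal is correct and follows essentially the same route as the paper, whose proof simply reuses the argument of Theorem \ref{thm:many-particle-Hardy-3d} with the two identities $\nabla_k f = \pm f\sum_{j\neq k}\big|\ln\frac{1}{R}|\bx_k-\bx_j|\big|^{-1}(\bx_k-\bx_j)^{-1}$ and $\Delta f = f\sum_{i<j<k}\tilde{R}_{ijk}^{-2}$ that you derive (your sign, coming from $\ln\frac{1}{R}r_{ij}<0$ on $(B_{R/2}(0))^N$, is the careful one, but it is immaterial since only $|\nabla f|^2$ and $\Delta f$ enter the GSR). Your only inaccurate side remark is that finiteness of $K_{2,N}$ is left open: the paper's remark records the rough bound $K_{2,N}\le 2(N-2)$ via Cauchy--Schwarz applied to each term of \eqref{def-R-tilde}.
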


\begin{proof}
	This theorem follows just as in the proof of 
	Theorem \ref{thm:many-particle-Hardy-3d},
	using that
	$\nabla_k f = f \sum_{j \neq k} \big| \ln \frac{1}{R} |\bx_k - \bx_j| \big|^{-1} (\bx_k - \bx_j)^{-1}$
	and
	$\Delta f = f \sum_{i<j<k} \tilde{R}_{ijk}^{-2}$.
\end{proof}

	Here we have a rough bound
	$K_{2,N} \le 2(N-2)$, 
	which simply follows by applying the Cauchy-Schwarz inequality 
	in $\R^2$ to each term in \eqref{def-R-tilde}.

	We have
	from \eqref{many-particle-Hardy-2d} a 
	non-trivial two-dimensional many-particle Hardy inequality 
	on $H_0^1(\Omega) = H_0^1(B_{R/2}(0)^N)$.
	This correponds to the physical situation where
	we consider the particles to be confined to a finite area.
	Also note that, despite the logarithms,
	this inequality gives a rough lower bound
	$$
		\inf_{\substack{u \in H_0^1(\Omega) \,:\\ \|u\|_{L^2} = 1}} 
		\int_{B_{R/2}(0)^N} |\nabla u|^2 \,dx 
		\ \ge \ \frac{\textrm{const} \cdot \binom{N}{2}}{(2+K_{2,N})R^2} 
	$$
	for the ground state energy of a confined gas of 
	two-dimensional non-inter\-act\-ing 
	bosonic particles, which hence is of the same form as for $d \ge 3$.

	As an illustration of the
	freedom for improvement left in the remainder terms in the above GSRs,
	we show that
	it is e.g. possible to combine and improve 
	Theorem \ref{thm:many-particle-separation}
	with 
	Theorem \ref{thm:many-particle-Hardy-3d}:

\begin{thm}[Combined pairwise separation and total separation]
	\label{thm:many-particle-Hardy-combined}
	Taking $\Omega$ as in \eqref{omega-minus-diagonals} and the ground state
	$f(x) := \rho^{2\frac{\beta}{\alpha}} \prod_{i<j}r_{ij}^{-(d-2)}$ 
	with $\alpha = (2+K_{d,N})^{-1}$ 
	and $\beta = \frac{1}{4}((\alpha N (d-2) - d)(N-1) + 2)$,
	we obtain
	\begin{multline} \label{many-particle-Hardy-combined}
		\int_{\R^{dN}} |\nabla u|^2 \,dx
		\ge \int_{\R^{dN}} \left( 
			\frac{(d-2)^2}{2 + K_{d,N}} \sum_{i<j} \frac{1}{r_{ij}^2} \,dx \right. \\
			\left. + \ \frac{N}{4}\left( 
					\left( \frac{N(d-2)}{2+K_{d,N}} - d \right)(N-1) + 2 
				\right)^2 \frac{1}{\rho^2}
			\right) |u|^2 \,dx
	\end{multline}
	for all $u \in H^1(\R^{dN})$, $d \ge 3$.
\end{thm}
\begin{proof}
	Taking $g(x) := \prod_{i<j} r_{ij}^{-(d-2)}$ and $h(x) := \rho^2$
	in the product GSR potential \eqref{product-GSR},
	we find, together with the above computations,
	$$
		\nabla g \cdot \nabla h = -(d-2)N^2(N-1)g,
	$$
	which follows from an identity related to \eqref{particle-sum-identity},
	\begin{equation} \label{particle-inverse-identity}
		\sum_{k} \sum_{i \neq k} \sum_{j \neq k} 
			(\bx_k - \bx_i) \cdot (\bx_k - \bx_j)^{-1}
			= \frac{1}{2} N^2(N-1).
	\end{equation}
	This identity is most easily proved 
	(cp., e.g., Eqn. (3.9b) in \cite{Hoppe:92})
	by introducing the center-of-mass $\bX := \frac{1}{N}\sum_j \bx_j$
	and writing $\sum_{i\neq k} (\bx_k - \bx_i) = N(\bx_k - \bX)$,
	so that the l.h.s. of \eqref{particle-inverse-identity} becomes
	\begin{multline*}
		N \sum_{k \neq j} (\bx_k - \bX) \cdot (\bx_k - \bx_j)^{-1}
		= \\ \frac{N}{2} \left(
				\sum_{k \neq j} (\bx_k - \bX) \cdot (\bx_k - \bx_j)^{-1} 
				- \sum_{j \neq k} (\bx_j - \bX) \cdot (\bx_k - \bx_j)^{-1}
			\right)
		= \frac{N}{2} \sum_{j \neq k} 1.
	\end{multline*}
	Now, collecting the constants in front of the terms 
	in \eqref{product-GSR} involving $1/\rho^2$
	gives
	$$
		4N\beta\Big( \big( \alpha(d-2)N(N-1) - d(N-1) + 2 \big)/2 - \beta \Big),
	$$
	which with the earlier bound \eqref{K-bound} on the circumradius terms 
	and the corresponding choice of $\alpha$,
	and again by optimizing in the weight $\beta$,
	produces the statement of the theorem.
\end{proof}

\subsection{Some other inequalities of many-particle type}

	Associated to
	the original one-dimensional Hardy inequality away from
	the origin is also the following `many-particle' version,
	which can be viewed as a certain 
	limiting case of
	Theorem \ref{thm:many-particle-Hardy-combined} for $d=1$:

\begin{thm} 
	\label{thm:many-particle-1d-product}
	Let $\Omega := \{ (x_1,\ldots,x_N) \in \R^N : x_1 \ldots x_N \neq 0 \}$.
	Taking the ground state 
	$f(x) := |x|^{2(1-N)} \prod_{k=1}^N |x_k|$
	we obtain
	\begin{eqnarray*}
		\int_\Omega |\nabla u|^2 \,dx
		- \int_\Omega \left( 
			\frac{1}{4} \sum_{k=1}^N \frac{1}{x_k^2} 
			+ (N-1)^2 \frac{1}{|x|^2}
			\right) |u|^2 \,dx \\
		= \int_\Omega |\nabla f^{-\frac{1}{2}} u|^2 \prod_{k=1}^N |x_k| \, |x|^{2(1-N)} \,dx  \ \ge \ 0,
	\end{eqnarray*}
	for all $u \in C_0^\infty(\Omega)$.
	Hence, the l.h.s. is non-negative on $H_0^1(\Omega)$.
\end{thm}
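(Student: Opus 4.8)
The plan is to apply the ground state representation of Proposition~\ref{prop:GSR} with the weight $\alpha = \frac{1}{2}$, which is dictated by the exponent appearing on the right-hand side of the claimed identity: there the density is $|\nabla f^{-1/2}u|^2 f$, matching $v = f^{-\alpha}u$ and $f^{2\alpha} = f$ precisely when $\alpha = \frac12$. The whole task then reduces to verifying that the GSR potential \eqref{GSR-potential} evaluated at $\alpha = \frac12$, namely $\frac14 |\nabla f|^2/f^2 - \frac12 \Delta f/f$, equals $\frac14 \sum_{k} x_k^{-2} + (N-1)^2 |x|^{-2}$. Here $f$ is not harmonic (it is only an approximate ground state), so both terms of the potential genuinely contribute and must be computed.

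Since $f$ is a product, it is cleanest to work with logarithmic derivatives, writing $\ln f = \sum_{k=1}^N \ln|x_k| + 2(1-N)\ln|x|$, where $|x|^2 = \sum_k x_k^2$ on $\R^N$. Then $\nabla f/f = \nabla \ln f$ has components $\partial_j \ln f = x_j^{-1} + 2(1-N)\,x_j/|x|^2$. First I would form $|\nabla f|^2/f^2 = |\nabla \ln f|^2$ by squaring and summing over $j$: the diagonal piece gives $\sum_j x_j^{-2}$, while the cross term and the final term collapse using the identities $\sum_j 1 = N$ and $\sum_j x_j^2 = |x|^2$, leaving only a multiple of $|x|^{-2}$.

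Next I would obtain the Laplacian through $\Delta f/f = |\nabla \ln f|^2 + \Delta \ln f$, so that only $\Delta \ln f = \sum_j \partial_j^2 \ln f$ remains to compute. Each term contributes $-x_j^{-2}$ from $\ln|x_j|$ together with a $|x|^{-2}$ contribution coming from $\partial_j(x_j/|x|^2) = |x|^{-2} - 2x_j^2/|x|^4$; summing over $j$ and again invoking $\sum_j x_j^2 = |x|^2$ reduces this to $-\sum_j x_j^{-2}$ plus a multiple of $|x|^{-2}$. Adding $|\nabla \ln f|^2$ back then cancels the $\sum_j x_j^{-2}$ contributions entirely, so that $\Delta f/f$ becomes the pure multiple $2N(1-N)/|x|^2$.

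The one place requiring care is the final bookkeeping of the $|x|^{-2}$ coefficient. Substituting into $\frac14 |\nabla f|^2/f^2 - \frac12 \Delta f/f$, the $\sum_j x_j^{-2}$ part is scaled to $\frac14 \sum_j x_j^{-2}$ as desired, while the several $|x|^{-2}$ contributions — one from $\frac14|\nabla \ln f|^2$ and one from $-\frac12\Delta f/f$ — must combine to exactly $(N-1)^2$. This is where a stray sign or factor could slip in, so I would track these explicitly; after collecting one finds the coefficient equal to $(1-N) - N(1-N) = (1-N)^2 = (N-1)^2$. Once the potential is thus identified, Proposition~\ref{prop:GSR} yields the stated equality for all $u \in C_0^\infty(\Omega)$, and the right-hand side is manifestly non-negative since it integrates the non-negative density $|\nabla f^{-1/2}u|^2 f \ge 0$. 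Non-negativity of the left-hand side on $H_0^1(\Omega)$ then follows by closure, as in the preceding theorems.
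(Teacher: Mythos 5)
Your proof is correct, and your computations check out: with $\ln f = \sum_k \ln|x_k| + 2(1-N)\ln|x|$ one indeed gets $|\nabla f|^2/f^2 = \sum_k x_k^{-2} + 4(1-N)|x|^{-2}$ and $\Delta f/f = 2N(1-N)|x|^{-2}$, so the GSR potential at $\alpha=\tfrac12$ collects to $\tfrac14\sum_k x_k^{-2} + (1-N)^2|x|^{-2}$, exactly as you track it. However, your route differs from the paper's. You take the composite ground state $f$ as given and verify Proposition~\ref{prop:GSR} directly, reading off $\alpha=\tfrac12$ from the form of the right-hand side and doing the full calculus on $f$ itself (including the identity $\Delta f/f = |\nabla\ln f|^2 + \Delta\ln f$, which is a clean way to organize the non-harmonic contribution). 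The paper instead uses a two-parameter product ansatz $g^\alpha h^\beta$ with $g := \prod_k|x_k|$ and $h := |x|^2$, writes down the corresponding modified GSR potential with a cross term $-2\alpha\beta\,\nabla g\cdot\nabla h/(gh)$, and then \emph{optimizes} over both weights, finding $\alpha=\tfrac12$ and $\beta=\tfrac{1-N}{2}$; this is where the exponents in $f = g\,h^{1-N}$ come from. The two arguments are algebraically equivalent, but they buy different things: yours is more elementary and self-contained, needing only Proposition~\ref{prop:GSR} verbatim with a single weight, whereas the paper's derivation explains \emph{why} the ground state has those exponents and shows the resulting constants are optimal within this two-parameter family of ansätze — information your direct verification does not recover. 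One minor point the paper makes explicit that you omit: $\codim \Omega^c = 1$ here, which is why the conclusion is stated only on $H_0^1(\Omega)$ rather than on all of $H^1(\R^N)$; your closure argument is nonetheless correct as stated, since $C_0^\infty(\Omega)$ is dense in $H_0^1(\Omega)$ by definition.
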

\noindent
	The corresponding Hardy inequality was proved and
	applied in \cite{Tidblom:thesis}.
\begin{proof}
	With $g := \prod_k |x_k|$ and $h := |x|^2$ in \eqref{product-GSR}, 
	and using that
	$\nabla_k g = g/x_k$, $\Delta g = 0$,
	$\nabla_k h = 2x_k$, and $\Delta h = 2N$,
	we find the potential
	$$
		\alpha(1-\alpha) \sum_k \frac{1}{x_k^2}
		+ 4\beta \left( \frac{2 - (1+2\alpha)N}{2} - \beta \right) \frac{1}{|x|^2},
	$$
	which with the condition that the first term be optimal
	yields $\alpha := \frac{1}{2}$ and $\beta := \frac{1-N}{2}$.
	We also note that $\codim \Omega^c = 1$.
\end{proof}

	The following application of our systematic approach 
	has some relations with 
	\emph{bosonic} and \emph{supersymmetric matrix models} 
	(see e.g. \cite{Lundholm:thesis}),
	and operators of the corresponding form also appear in
	a class of solvable models 
	\cite{Murthy-Bhaduri-Sen:96,Ghosh-Rao:98}.
	Consider an $N$-tuple $(\bx_1,\ldots,\bx_N)$ of vectors in $\mathbb{R}^d$ 
	and define the bivectors (2-blades)
	$B_{ij} := \bx_i \wedge \bx_j$, $i<j$, with magnitudes
	$$
		|B_{ij}| = \sqrt{|\bx_i|^2|\bx_j|^2 - (\bx_i \cdot \bx_j)^2} = |\bx_i||\bx_j|\sin \theta_{ij},
	$$
	and inverses $B_{ij}^{-1} = (\bx_i \wedge \bx_j)^{-1} = -(\bx_i \wedge \bx_j) / |B_{ij}|^{2}$
	(again, cp. \cite{Lundholm-Svensson:09} or Appendix A).
	Note that $B_{ij} = 0$ if and only if $\bx_i$ and $\bx_j$ are parallel.
	We define
	the corresponding geometric quantities
	$$
		\Sigma_1(x) := \sum_{j \neq k} \big| \bx_j \liprod (\bx_j \wedge \bx_k)^{-1} \big|^2
		= \sum_{j<k} \frac{|\bx_j|^2 + |\bx_k|^2}{|\bx_j \wedge \bx_k|^2},
	$$
	and
	$$
		\Sigma_2(x) := \sum_{i \neq j \neq k \neq i} 
			\left( \bx_i \liprod (\bx_i \wedge \bx_k)^{-1} \right) \cdot
			\left( \bx_j \liprod (\bx_j \wedge \bx_k)^{-1} \right),
	$$
	where we note that
	$$
		- \bx_i \liprod (\bx_i \wedge \bx_k)^{-1}
		= \bx_i (\bx_i \wedge \bx_k) |\bx_i \wedge \bx_k|^{-2}
	$$
	is the vector in the (oriented) plane $\bx_i \wedge \bx_k$
	obtained by rotating $\bx_i$ by 90$^\circ$ 
	towards $\bx_k$
	and rescaling by the inverse area $|\bx_i \wedge \bx_k|^{-1}$.
	We then have the following result, 
	which one could think of as a 
	higher-dimensional combination 
	of Theorem \ref{thm:subspace-Hardy}
	with Theorem \ref{thm:many-particle-Hardy-3d},
	involving 1-dim\-en\-sional subspaces ($A=\bx_j$) instead of 
	0-dim\-en\-sional ($A = 1$):

\begin{thm}[Parallelity of pairs of vectors in $d > 3$ or $d = 2$] 
	\label{thm:parallel-no3d}
	\ \\
	Let $\Omega := \{ (\bx_1,\ldots,\bx_N) \in \R^{dN} : \bx_i \wedge \bx_j \neq 0 \ \forall i \neq j \}$.
	Taking the ground state $f(x) := \prod_{j<k} |B_{jk}|^{-(d-3)}$ one obtains
	\begin{eqnarray*}
		\int_\Omega |\nabla u|^2 \,dx
		- (d-3)^2 \int_\Omega \left( 
			\alpha(1-\alpha) \Sigma_1 
			- \alpha^2 \Sigma_2 
			\right) |u|^2 \,dx
			\\
		= \int_\Omega |\nabla f^{-\alpha} u|^2 f^{2\alpha} \,dx  \ \ge \ 0,
	\end{eqnarray*}
	for all $u \in C_0^\infty(\Omega)$.
	Hence, if 
	$C_{d,N} := \sup_{x \in \Omega} \Sigma_2(x) / \Sigma_1(x)$,
	then
	\begin{eqnarray*}
		\int_\Omega |\nabla u|^2 \,dx
		- \frac{(d-3)^2}{4(1 + C_{d,N})} \int_\Omega 
			\Sigma_1
			|u|^2 \,dx
		\ge \int_\Omega |\nabla f^{-\alpha} u|^2 f^{2\alpha} \,dx  \ \ge \ 0,
	\end{eqnarray*}
	and,
	\begin{eqnarray*}
		\int_\Omega |\nabla u|^2 \,dx
		- \frac{(d-3)^2}{4C_{d,N}(1 + C_{d,N})} \int_\Omega \Sigma_2 |u|^2 \,dx
		\ge \int_\Omega |\nabla f^{-\alpha} u|^2 f^{2\alpha} \,dx  \ \ge \ 0,
	\end{eqnarray*}
	with $\alpha := \frac{1}{2(1 + C_{d,N})}$ (in both inequalities).
\end{thm}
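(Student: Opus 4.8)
The plan is to follow the same systematic ground state strategy as in the proof of Theorem~\ref{thm:many-particle-Hardy-3d}, the only new ingredient being the geometric-algebra computation of the two GSR quantities $|\nabla f|^2/f^2$ and $\Delta f/f$ for the product ground state $f = \prod_{j<k} g_{jk}$ with $g_{jk} := |B_{jk}|^{-(d-3)}$. First I would take the logarithm, so that $\nabla_{\bx_k} \ln f = -(d-3) \sum_{j \neq k} \nabla_{\bx_k} \ln|B_{jk}|$, reducing everything to the single-pair quantity $\nabla_{\bx_k} \ln|B_{jk}|$. Using the contraction rules of Appendix~A one identifies
$$
	\nabla_{\bx_k} \ln|B_{jk}| = -\,\bx_j \liprod (\bx_j \wedge \bx_k)^{-1},
$$
together with the magnitude identity $|\bx_j \liprod (\bx_j \wedge \bx_k)^{-1}|^2 = |\bx_j|^2/|B_{jk}|^2$. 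This is precisely the $p=1$ specialization of the subspace set-up (Theorem~\ref{thm:subspace-Hardy} with $A = \bx_j$), and it is the step I expect to require the most care, since the exact numerical factor $(d-3)$ must be tracked throughout.

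Second, I would assemble $|\nabla f|^2/f^2 = \sum_k |\nabla_{\bx_k}\ln f|^2$ and split the resulting double sum into its diagonal ($i=j$) and off-diagonal ($i \neq j$, both $\neq k$) parts. The diagonal part sums $|\bx_j|^2/|B_{jk}|^2$ over all ordered $(j,k)$, which is exactly $\Sigma_1$; the off-diagonal part, being a sum over distinct triples of the products $(\bx_i \liprod (\bx_i\wedge\bx_k)^{-1})\cdot(\bx_j \liprod (\bx_j\wedge\bx_k)^{-1})$, is exactly $\Sigma_2$. Hence $|\nabla f|^2/f^2 = (d-3)^2(\Sigma_1 + \Sigma_2)$. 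The mild bookkeeping obstacle here is checking that the ordered triple sum reproduces the definition of $\Sigma_2$ with the correct multiplicity.

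Third, for the Laplacian I would exploit that each factor $g_{jk} = |\bx_j\wedge\bx_k|^{-(d-3)}$ is harmonic in both of its arguments, again by the $p=1$ case of Theorem~\ref{thm:subspace-Hardy}, so $\Delta g_{jk} = 0$ on $\Omega$. Writing $\Delta f/f = \Delta \ln f + |\nabla f|^2/f^2$ and noting that the single-pair Laplacian contributes $(d-3)(|\bx_j|^2+|\bx_k|^2)/|B_{jk}|^2$, one gets $\Delta \ln f = -(d-3)^2\Sigma_1$ and therefore $\Delta f/f = (d-3)^2\Sigma_2$. Equivalently, since all $g_{jk}$ are harmonic, only the cross terms survive in $\Delta f$ and these directly reproduce $(d-3)^2\Sigma_2$.

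Finally, inserting these into the GSR potential of Proposition~\ref{prop:GSR} gives
$$
	\alpha(1-\alpha)\frac{|\nabla f|^2}{f^2} - \alpha\frac{\Delta f}{f} = (d-3)^2\bigl(\alpha(1-\alpha)\Sigma_1 - \alpha^2\Sigma_2\bigr),
$$
which is the claimed identity. The two inequalities then follow by the same optimization as in Theorem~\ref{thm:many-particle-Hardy-3d}: for the $\Sigma_1$ bound I would estimate $\Sigma_2 \le C_{d,N}\Sigma_1$ and maximize $\alpha(1 - \alpha(1+C_{d,N}))$ over $\alpha$, which is optimal at $\alpha = \tfrac{1}{2(1+C_{d,N})}$ with value $\tfrac{1}{4(1+C_{d,N})}$; for the $\Sigma_2$ bound I would instead use $\Sigma_1 \ge \Sigma_2/C_{d,N}$ (valid since the optimal $\alpha$ lies in $[0,1]$, so $\alpha(1-\alpha)\ge0$), giving the coefficient $\tfrac{(d-3)^2}{4C_{d,N}(1+C_{d,N})}$ at the same weight.
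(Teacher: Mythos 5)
Your proposal is correct and follows essentially the same route as the paper: the same ground state $f=\prod_{j<k}|B_{jk}|^{-(d-3)}$, the same geometric-algebra identities from Appendix A giving $|\nabla f|^2=(d-3)^2f^2(\Sigma_1+\Sigma_2)$ and $\Delta f=(d-3)^2f\Sigma_2$ (your logarithmic-derivative bookkeeping, $\Delta f/f=\Delta\ln f+|\nabla f|^2/f^2$, is just an equivalent packaging of the paper's observation that harmonicity of each factor leaves only the cross terms), and the same optimization over $\alpha$ as in Theorem \ref{thm:many-particle-Hardy-3d}, with the correct optimal weight $\alpha=\frac{1}{2(1+C_{d,N})}$ and constants in both inequalities.
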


	Note that here $C_{d,N} \le N-2$ 
	by the Cauchy-Schwarz inequality in $\R^d$.
	Furthermore,
	$\codim \Omega^c = dN - (N-2)d - d - 1 = d-1$,
	so the corresponding inequalities hold on $H^1(\R^{dN})$ for $d>2$
	and $H_0^1(\Omega)$ for $d=2$.
	For $N=2$ we have $\Sigma_2 = 0$ and hence the optimal and sharp 
	2-particle Hardy inequality
	\begin{equation} \label{2-particle-area-Hardy}
		\int_{\R^{2d}} |\nabla u|^2 \,dx 
		\ge \frac{(d-3)^2}{4} \int_{\R^{2d}} 
			\frac{|\bx_1|^2 + |\bx_2|^2}{|\bx_1 \wedge \bx_2|^2} 
			|u|^2 \,dx,
	\end{equation}
	for $u \in H^1(\R^{2d})$, $d>3$, or $u \in H^1_0(\Omega)$, $d=2$
	(cp. Theorem \ref{thm:subspace-Hardy}).

\begin{proof}
	One computes (see Appendix A)
	$$
		\nabla_k f = -\frac{(d-3)}{2} f \sum_{j \neq k} |B_{jk}|^{-2} \nabla_k |B_{jk}|^2
		= -(d-3) f \sum_{j \neq k} \bx_j \liprod (\bx_j \wedge \bx_k)^{-1},
	$$
	implying
	$$
		|\nabla f|^2 = (d-3)^2 f^2 (\Sigma_1 + \Sigma_2),
	$$
	as well as $\Delta f = (d-3)^2 f \Sigma_2$, 
	due to $\Delta_k |B_{kj}|^{-(d-3)} = 0$ on $\Omega$ $\forall j$.
\end{proof}

	For the critical case $d=3$ we once again need 
	a length scale.
	One natural way to obtain this in the context of matrix 
	models\footnote{The case with $d=3$ and $N$ particles
	corresponds in the matrix models to dimensionally reduced
	$\SU(2)$ Yang-Mills theory from $N+1$ spacetime dimensions.}
	is to introduce the \emph{matrix model potential}
	$W := \sum_{j<k} |B_{jk}|^2$
	and consider e.g.
	$\Omega_0 := \Omega \cap \{W<R\}$
	for which $H^1_0(\Omega_0) = H^1_0(\{W<R\})$.
	Defining
	$$
		\tilde{\Sigma}_1(x) 
		:= \sum_{j<k} \frac{|\bx_j|^2 + |\bx_k|^2}{|B_{jk}|^2 \big| \ln \frac{1}{R}|B_{jk}| \big|^2},
	$$
	and
	$$
		\tilde{\Sigma}_2(x) := \sum_{i \neq j \neq k \neq i} 
			\frac{ \bx_i \liprod B_{ik}^{-1} }{ \big| \ln \frac{1}{R}|B_{ik}| \big| } \cdot
			\frac{ \bx_j \liprod B_{jk}^{-1} }{ \big| \ln \frac{1}{R}|B_{jk}| \big| },
	$$
	we then have the following 
	analog of Theorem \ref{thm:parallel-no3d}.

\begin{thm}[Parallelity of pairs of vectors in $d = 3$] 
	\label{thm:parallel-3d}
	Taking the ground state 
	$f(x) := \prod_{j<k} |\ln \frac{1}{R}|B_{jk}||$ one obtains
	\begin{eqnarray*}
		\int_{\Omega_0} |\nabla u|^2 \,dx
		- \int_{\Omega_0} \left( 
			\alpha(1-\alpha) \tilde{\Sigma}_1 
			- \alpha^2 \tilde{\Sigma}_2 
			\right) |u|^2 \,dx
			\\
		= \int_{\Omega_0} |\nabla f^{-\alpha} u|^2 f^{2\alpha} \,dx  \ \ge \ 0,
	\end{eqnarray*}
	for all $u \in C_0^\infty(\Omega_0)$.
	Hence, if 
	$C_{3,N} := \sup_{x \in \Omega_0} \tilde{\Sigma}_2(x) / \tilde{\Sigma}_1(x)$
	($\le N-2$),
	then
	\begin{eqnarray*}
		\int_{\Omega_0} |\nabla u|^2 \,dx
		- \frac{1}{4(1 + C_{3,N})} \int_{\Omega_0} 
			\tilde{\Sigma}_1 |u|^2 \,dx
		\ge \int_{\Omega_0} |\nabla f^{-\alpha} u|^2 f^{2\alpha} \,dx  \ \ge \ 0,
	\end{eqnarray*}
	and,
	\begin{eqnarray*}
		\int_{\Omega_0} |\nabla u|^2 \,dx
		- \frac{1}{4C_{d,N}(1 + C_{3,N})} \int_{\Omega_0} 
			\tilde{\Sigma}_2 |u|^2 \,dx
		\ge \int_{\Omega_0} |\nabla f^{-\alpha} u|^2 f^{2\alpha} \,dx  \ \ge \ 0,
	\end{eqnarray*}
	with $\alpha := \frac{1}{2(1 + C_{3,N})}$ (in both inequalities).
	In particular, with $N=2$,
	\begin{equation} \label{2-particle-area-Hardy-3d}
		\int_{\R^{6}} |\nabla u|^2 \,dx
		\ge \frac{1}{4} \int_{\R^{6}} 
			\frac{|\bx_1|^2 + |\bx_2|^2}{|\bx_1 \wedge \bx_2|^2 \left( \ln \frac{1}{R}|\bx_1 \wedge \bx_2| \right)^2} 
			|u|^2 \,dx,
	\end{equation}
	for all $u \in H^1_0(\{W<R\})$.
\end{thm}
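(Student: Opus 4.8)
The plan is to mirror the proof of Theorem~\ref{thm:parallel-no3d}, replacing the power-law ground state by its critical-dimension logarithmic counterpart, exactly as Theorem~\ref{thm:many-particle-Hardy-2d} does relative to Theorem~\ref{thm:many-particle-Hardy-3d}. Write $L_{jk} := \ln\frac{1}{R}|B_{jk}|$, so that $f = \prod_{j<k}|L_{jk}|$. First I would record that on $\Omega_0 = \Omega\cap\{W<R\}$ one has $0<|B_{jk}|<R$ for every pair, hence $L_{jk}<0$ and $f>0$; thus $f$ is an admissible (approximate) ground state and $v=f^{-\alpha}u\in C_0^\infty(\Omega_0)$, so Proposition~\ref{prop:GSR} applies. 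Everything then reduces to computing the two differential quantities feeding the GSR potential \eqref{GSR-potential}.

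From the geometric-algebra computation already used for Theorem~\ref{thm:parallel-no3d} (see Appendix~A), $\nabla_k\ln|B_{jk}| = \bx_j\liprod B_{jk}^{-1}$, whence by the chain rule
$$\nabla_k f = f\sum_{j\neq k}\frac{1}{L_{jk}}\,\bx_j\liprod B_{jk}^{-1}.$$
Squaring and summing over $k$, the diagonal contributions $i=j$ use $|\bx_j\liprod B_{jk}^{-1}|^2 = |\bx_j|^2/|B_{jk}|^2$ and reassemble, on pairing $j\leftrightarrow k$, into $\tilde{\Sigma}_1$, while the genuine three-index terms $i\neq j$ give $\tilde{\Sigma}_2$. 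Since $L_{jk}<0$ on $\Omega_0$, every product $1/(L_{ik}L_{jk})$ equals $1/(|L_{ik}||L_{jk}|)$, matching the absolute values in the definitions of $\tilde{\Sigma}_1,\tilde{\Sigma}_2$. This yields $|\nabla f|^2 = f^2(\tilde{\Sigma}_1+\tilde{\Sigma}_2)$.

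For the Laplacian I would use the identity $\Delta f/f = \Delta\ln f + |\nabla\ln f|^2$ together with the fact that, in the critical subspace case $d-p=2$ with $p=1$ (i.e.\ $d=3$), $\ln|B_{jk}|=\ln|\bx_k\wedge\bx_j|$ is harmonic in $\bx_k$, exactly as in Theorem~\ref{thm:subspace-Hardy-2d}: $\Delta_k\ln|B_{jk}|=0$ on $\Omega_0$. Hence the only surviving part of $\nabla_k\cdot(L_{jk}^{-1}\,\bx_j\liprod B_{jk}^{-1})$ is the chain-rule derivative of $L_{jk}^{-1}$, giving $-L_{jk}^{-2}|\bx_j\liprod B_{jk}^{-1}|^2$, and summation produces $\Delta\ln f = -\tilde{\Sigma}_1$. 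Combining, $\Delta f/f = -\tilde{\Sigma}_1 + (\tilde{\Sigma}_1+\tilde{\Sigma}_2) = \tilde{\Sigma}_2$. Substituting into \eqref{GSR-potential} then gives the GSR potential $\alpha(1-\alpha)\tilde{\Sigma}_1-\alpha^2\tilde{\Sigma}_2$, which is the first displayed identity.

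The two optimized inequalities follow mechanically from the definition of $C_{3,N}$: bounding $-\alpha^2\tilde{\Sigma}_2\ge-\alpha^2 C_{3,N}\tilde{\Sigma}_1$ gives the coefficient $\alpha\bigl(1-\alpha(1+C_{3,N})\bigr)$ on $\tilde{\Sigma}_1$, maximized at $\alpha=\frac{1}{2(1+C_{3,N})}$ with value $\frac{1}{4(1+C_{3,N})}$; bounding instead $\tilde{\Sigma}_1\ge C_{3,N}^{-1}\tilde{\Sigma}_2$ (legitimate since $\alpha(1-\alpha)\ge0$ at this $\alpha$) yields the $\tilde{\Sigma}_2$-version with constant $\frac{1}{4C_{3,N}(1+C_{3,N})}$. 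Specializing to $N=2$ kills $\tilde{\Sigma}_2$, so $\alpha=\tfrac12$ is optimal and one reads off \eqref{2-particle-area-Hardy-3d}, with the closure argument promoting it to $H_0^1(\{W<R\})$. The main obstacle is the bookkeeping in the $\Delta f$ step: one must confirm that the harmonicity $\Delta_k\ln|B_{jk}|=0$ is genuinely the $d-p-2=0$ specialization of Theorem~\ref{thm:subspace-Hardy-2d}, and that the signs of the logarithms (all negative on $\Omega_0$) combine so that the chain-rule remainder reproduces $-\tilde{\Sigma}_1$ rather than $+\tilde{\Sigma}_1$; everything else repeats the optimization already carried out for Theorem~\ref{thm:parallel-no3d}.
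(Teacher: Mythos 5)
Your proposal is correct and coincides with the proof the paper intends: the paper itself only remarks that the result follows ``completely analogously'' to Theorem~\ref{thm:parallel-no3d} with the logarithmic modifications of Theorem~\ref{thm:many-particle-Hardy-2d}, and your computations --- $\nabla_k f = f\sum_{j\neq k} L_{jk}^{-1}\,\bx_j\liprod B_{jk}^{-1}$, the splitting $|\nabla f|^2 = f^2(\tilde{\Sigma}_1+\tilde{\Sigma}_2)$ into diagonal and cross terms, $\Delta f = f\,\tilde{\Sigma}_2$ from the harmonicity $\Delta_k \ln|B_{jk}| = 0$ (the $d-p-2=0$ case of the Appendix~A formula, exactly as in Theorem~\ref{thm:subspace-Hardy-2d}), and the two optimizations at $\alpha = \frac{1}{2(1+C_{3,N})}$ yielding the constants $\frac{1}{4(1+C_{3,N})}$ and $\frac{1}{4C_{3,N}(1+C_{3,N})}$ --- are precisely those steps spelled out. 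The only quibble, inherited from the paper's own loose normalization rather than introduced by you, is that $W<R$ gives $|B_{jk}|<\sqrt{R}$ rather than $|B_{jk}|<R$, so the uniform negativity of the logarithms on $\Omega_0$ (which you rightly identify as needed to match the absolute values in $\tilde{\Sigma}_1,\tilde{\Sigma}_2$) strictly requires $R\ge 1$ or a cutoff of the form $\{W<R^2\}$; this is cosmetic and does not affect the argument.
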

\noindent
	The proof is completely analogous to the proof of 
	Theorem \ref{thm:parallel-no3d}, with similar modifications as in 
	Theorem \ref{thm:many-particle-Hardy-2d}.

\subsection{Inequalities involving volumes of simplices of points}

	Consider again a tuple of $N$ vectors in $\mathbb{R}^d$, 
	but now think of them as points.
	These points span a (possibly degenerate) $N-1$-simplex with
	volume given by
	\begin{eqnarray*}
		V(\bx_1,\ldots,\bx_N) 
		&:=& \frac{1}{(N-1)!} \big| (\bx_{1} - \bx_{N}) \wedge \ldots \wedge (\bx_{N-1} - \bx_{N}) \big| \\
		&=& \frac{1}{(N-1)!} \big| \det [(\bx_j-\bx_N) \cdot (\bx_k-\bx_N)]_{1 \le j,k < N} \big|^{\frac{1}{2}}.
	\end{eqnarray*}
	Note that this expression is invariant under any permutation of
	the points (this is also shown explicitly in Appendix A).
	We have then the following geometric generalization of the $N=2$ case
	in Theorem \ref{thm:many-particle-separation} 
	(or \ref{thm:many-particle-Hardy-3d}):

\begin{thm}[Volume of the $N-1$-simplex of $N$ points in $\R^d$, $d > N$ or $d = N-1$] 
	\label{thm:volume-noncritical}
	Consider $\Omega := \{ (\bx_1,\ldots,\bx_{N}) \in \R^{dN} : V(x) > 0 \}$
	and the ground state $f(x) := ( (N-1)! \,
		V(x) )^{-(d-N)}$. 
	Denoting 
	\begin{equation} \label{simplex-volume-ratio}
		\Sigma^{(N)}(x) := \frac{ 
			\sum_{k=1}^N V(\bx_1,\ldots,\check{\bx}_{k},\ldots,\bx_N)^2 
			}{ (N-1)^2 \, V(\bx_1,\ldots,\bx_N)^2 }
	\end{equation}
	(where $\check{}$ means deletion),
	we then have
	\begin{eqnarray} \label{simplex-Hardy}
		\int_\Omega |\nabla u|^2 \,dx
		- \frac{(d-N)^2}{4} \int_\Omega \Sigma^{(N)} |u|^2 \,dx
		= \int_\Omega |\nabla f^{-\frac{1}{2}} u|^2 f \,dx  \ \ge \ 0,
	\end{eqnarray}
	for all $u \in C_0^\infty(\Omega)$.
	The corresponding sharp Hardy inequalities hold on
	$H^1(\R^{dN})$ for $d>N$ and $H^1_0(\Omega)$ for $d=N-1$.
\end{thm}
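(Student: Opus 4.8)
The plan is to apply the ground state representation (Proposition \ref{prop:GSR}) with the optimal Hardy weight $\alpha = \tfrac12$. Once I verify that the chosen $f$ is harmonic on $\Omega$, i.e.\ $\Delta f = 0$, the GSR potential \eqref{GSR-potential} collapses to $\tfrac14\, |\nabla f|^2 / f^2$, and the asserted identity \eqref{simplex-Hardy} follows immediately provided I also identify $|\nabla f|^2/f^2 = (d-N)^2\, \Sigma^{(N)}$. So the entire content reduces to two computations: the harmonicity of $f$ and the evaluation of $|\nabla f|^2/f^2$. I would carry both out by freezing all but one point and reducing to the single-particle subspace result already established in Theorem \ref{thm:subspace-Hardy}.

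For the harmonicity, I fix an index $k$ and regard $f$ as a function of $\bx_k$ alone, the other points being held fixed. The remaining $N-1$ points span an $(N-2)$-dimensional affine subspace $L_k \subset \R^d$, genuinely $(N-2)$-dimensional since $V_k := V(\bx_1,\ldots,\check{\bx}_k,\ldots,\bx_N) > 0$ on $\Omega$, and the elementary base-times-height decomposition of a simplex gives
\[
	(N-1)!\,V = (N-2)!\,V_k \cdot h_k, \qquad h_k := \dist(\bx_k, L_k).
\]
Writing $H_k := (N-2)!\,V_k$, which is \emph{independent} of $\bx_k$, I then have $f = H_k^{-(d-N)}\, h_k^{-(d-N)}$ as a function of $\bx_k$. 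But $L_k$ has codimension $d-(N-2) = d-N+2$, so $h_k^{-(d-N)} = h_k^{-((d-N+2)-2)}$ is precisely the fundamental solution attached to the affine subspace $L_k$; by Theorem \ref{thm:subspace-Hardy} together with the subsequent remark on affine subspaces, it is harmonic in $\bx_k$ away from $L_k$. Hence $\Delta_{\bx_k} f = 0$ for every $k$, and summing over $k$ yields $\Delta f = 0$ on all of $\Omega$.

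The gradient is handled in the same coordinates. Since $\nabla_{\bx_k} h_k$ is the unit normal pointing from $L_k$ to $\bx_k$, I get $|\nabla_{\bx_k}\big((N-1)!\,V\big)| = H_k$, and therefore $|\nabla f|^2/f^2 = (d-N)^2 \sum_k H_k^2 / \big((N-1)!\,V\big)^2$. The trivial factorial identity $H_k^2 / \big((N-1)!\,V\big)^2 = V_k^2 / \big((N-1)^2\, V^2\big)$ then converts this sum precisely into $(d-N)^2\,\Sigma^{(N)}$ as defined in \eqref{simplex-volume-ratio}, which closes the argument for \eqref{simplex-Hardy}. The permutation-invariance of $V$, and hence of the whole construction, makes the choice of which point to single out immaterial, as recorded in Appendix A.

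Finally, for the function-space statement I read off $\codim \Omega^c = d-N+2$ from the same local picture: freezing $N-1$ generic points, $\Omega^c$ meets the last copy of $\R^d$ exactly in the subspace $L_k$. For $d>N$ this codimension is $\ge 3 \ge 2$, so $C_0^\infty(\Omega)$ is dense in $H^1(\R^{dN})$ and the Hardy inequality extends there; for $d=N-1$ it equals $1$, and one obtains the inequality only on $H_0^1(\Omega)$. The one genuine obstacle is establishing $\Delta f = 0$; the block-wise reduction above is what makes it a direct corollary of Theorem \ref{thm:subspace-Hardy} rather than a bare-hands Laplacian computation, and sharpness of the constant then follows from the $v$-term in \eqref{simplex-Hardy} in the manner of Appendix B.
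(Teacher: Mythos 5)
Your proposal is correct and follows essentially the same route as the paper: both proofs freeze all but one point $\bx_k$, recognize $f$ as the fundamental solution relative to the affine hull of the remaining $N-1$ points so that Theorem \ref{thm:subspace-Hardy} gives $\Delta_{\bx_k} f = 0$ and the blockwise gradient, then sum over $k$ and apply the GSR with $\alpha = \tfrac12$, with the codimension count $\codim \Omega^c = d-N+2$ settling the function spaces. The only difference is notational: where the paper encodes the decomposition via the blade identity $A = (\bx_k - \bx_N) \wedge A_k$ and the gradient formula \eqref{blade-gradient} of Appendix A, you use the equivalent base-times-height factorization $(N-1)!\,V = (N-2)!\,V_k\,h_k$, which amounts to the same computation since $|A_k|/|A| = 1/h_k$.
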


	Before proving this theorem, 
	it is convenient to introduce the following notation:
	\begin{eqnarray*}
		A_k(x) &:=& (-1)^{k-1} \bigwedge_{1 \le j \neq k < N} (\bx_j - \bx_N), \quad k=1,\ldots,N-1, \\
		A_{k=N}(x) &:=& (-1)^{N-1} \bigwedge_{1 \le j < N-1} (\bx_j - \bx_{N-1}), \quad \textrm{and} \\
		A(x) &:=& \bigwedge_{1 \le j < N} (\bx_j - \bx_N),
	\end{eqnarray*}
	so that
	$$
		\Sigma^{(N)}(x) = \sum_{k=1}^N \left| A_k \liprod A^{-1} \right|^2
		= \sum_{k=1}^N |A_k|^2/|A|^2.
	$$
	Note that $\Sigma^{(N)}$
	describes a ratio of a mean of squares of volumes of 
	all $N-2$-dimensional
	subsimplices $A_k$ to the square of the volume of the full simplex $A$.
	In particular, for $N = 3$,
	\begin{equation} \label{volume-circumradius-relation}
		\Sigma^{(3)} 
		= \frac{r_{12}^2 + r_{23}^2 + r_{31}^2}{4V(\bx_1,\bx_2,\bx_3)^2}
		\ge \frac{4}{27} \frac{\rho^2}{(R_{123})^4},
	\end{equation}
	where we used that the simplex area $V$ 
	is bounded by $\frac{3\sqrt{3}}{4\pi}$ 
	times the area of the circumcircle. 

\begin{proof}[Proof of Theorem \ref{thm:volume-noncritical}]
	Note that we have defined $A_{k<N}$ and $A_N$ s.t.
	(cp. Appendix A)
	\begin{equation} \label{simplex-invariance}
		A = (\bx_k - \bx_N) \wedge A_k = (\bx_N - \bx_{N-1}) \wedge A_N.
	\end{equation}
	For each fixed $k < N$ we then have 
	$f(x) = |(\bx_k - \bx_N) \wedge A_k|^{-(d-N)}$
	and just as in Theorem \ref{thm:subspace-Hardy} that 
	$\nabla_{\bx_k} f = -(-1)^{\binom{N-1}{2}}(d-N) (A_k \liprod A^{-1}) f$
	and $\Delta_{\bx_k} f = 0$,
	and hence the following optimal single-particle Hardy GSR:
	\begin{equation} \label{simplex-GSR}
		\int_\Omega |\nabla_k u|^2 \,dx
		- \frac{(d-N)^2}{4} \int_\Omega |A_k \liprod A^{-1}|^2 |u|^2 \,dx
		= \int_\Omega |\nabla_k f^{-\frac{1}{2}} u|^2 f \,dx.
	\end{equation}
	For $k = N$ we use the invariance of $V$ under permutations,
	i.e. \eqref{simplex-invariance},
	and write instead
	$f(x) = |(\bx_N - \bx_{N-1}) \wedge A_N|^{-(d-N)}$,
	with analogous conclusions.
	Hence, $|\nabla f|^2 = \sum_k |\nabla_k f|^2 = (d-N)^2 \Sigma^{(N)} f^2$,
	$\Delta f = \sum_k \Delta_k f = 0$,
	and the GSR \eqref{simplex-Hardy} follows.
	Finally, note that in this case 
	$\codim \Omega^c = dN - (N-1)d - (N-2) = d-N+2$
	and hence $H_0^1(\Omega) = H^1(\R^{dN})$ for $d>N-1$.
	Sharpness is proven in Appendix B.
\end{proof}

	Again, for the codimension-critical case $d=N$ we fix a length
	scale $R>0$ and restrict to, e.g.,
	$\Omega_R := \{ (\bx_1,\ldots,\bx_{N}) \in \R^{d^2} : V(x) < R \}$.
	
\begin{thm}[Volume of the $d-1$-simplex of $d$ points in $\R^d$]
	\label{thm:volume-critical}
	Consider $\Omega := \{ (\bx_1,\ldots,\bx_d) \in \R^{d^2} : 0 < V(x) < R \}$
	and the ground state $f(x) := \left| \ln \frac{1}{R}V(x) \right|$. 
	Denoting 
	\begin{equation} \label{simplex-volume-ratio-critical}
		\tilde{\Sigma}^{(d)}(x) := \frac{ 
			\sum_{k=1}^d V(\bx_1,\ldots,\check{\bx}_{k},\ldots,\bx_d)^2 
			}{ (d-1)^2 \, V(\bx_1,\ldots,\bx_d)^2 \left( \ln \frac{1}{R}V(\bx_1,\ldots,\bx_d) \right)^2 },
	\end{equation}
	we then have
	\begin{eqnarray} \label{simplex-Hardy-critical}
		\int_\Omega |\nabla u|^2 \,dx
		- \frac{1}{4} \int_\Omega \tilde{\Sigma}^{(d)} |u|^2 \,dx
		= \int_\Omega |\nabla f^{-\frac{1}{2}} u|^2 f \,dx  \ \ge \ 0,
	\end{eqnarray}
	for all $u \in C_0^\infty(\Omega)$.
	The corresponding Hardy inequality holds on $H^1_0(\Omega_R)$.
\end{thm}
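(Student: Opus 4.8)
The plan is to mirror the proof of Theorem~\ref{thm:volume-noncritical}, retaining the notation $A_k$, $A$ and the permutation invariance~\eqref{simplex-invariance}, but replacing the noncritical subspace GSR of Theorem~\ref{thm:subspace-Hardy} by its codimension-critical counterpart, Theorem~\ref{thm:subspace-Hardy-2d}. The decisive observation is that for $N=d$ each affine subspace $\bar{A}_k$, spanned by the remaining $d-1$ points, has dimension $d-2$ and hence codimension exactly $2$ in $\R^d$; freezing all particles but $\bx_k$ therefore places us precisely in the logarithmically critical situation, where the logarithm rather than a power is the correct ground state ansatz.

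First I would fix $k<N$ and regard $f(x) = \big|\ln\frac{1}{R}V(x)\big|$ as a function of $\bx_k$ alone. Since $0<V<R$ on $\Omega$ we have $\ln\frac{1}{R}V<0$, so $f = \ln\frac{R}{V}$ is a smooth positive ground state there, and using $A = (\bx_k-\bx_N)\wedge A_k$ with $A_k$ independent of $\bx_k$ we have, up to an additive constant, $\ln V = \ln\big|(\bx_k-\bx_N)\wedge A_k\big|$. Applying Theorem~\ref{thm:subspace-Hardy-2d} with $p=d-2$ (so that the factor $d-p-2$ vanishes) then gives
\begin{equation*}
	\nabla_{\bx_k} f = \pm\, (A_k \liprod A^{-1}), \qquad \Delta_{\bx_k} f = 0,
\end{equation*}
and hence $|\nabla_{\bx_k} f|^2 = |A_k|^2/|A|^2$. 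For $k=N$ I would instead write $A = (\bx_N-\bx_{N-1})\wedge A_N$ and argue identically. Summing over $k$ then yields $|\nabla f|^2 = \sum_{k=1}^d |A_k|^2/|A|^2$ and $\Delta f = 0$.

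It remains to identify the resulting GSR potential with $\tfrac14\tilde{\Sigma}^{(d)}$. Using that the deleted simplex has volume $V(\bx_1,\ldots,\check{\bx}_k,\ldots,\bx_d) = |A_k|/(d-2)!$ while $V(x) = |A|/(d-1)!$, and that $(d-1)!/(d-2)! = d-1$, the prefactors $(d-1)^2$ cancel and one finds
\begin{equation*}
	\tilde{\Sigma}^{(d)} = \frac{\sum_{k=1}^d |A_k|^2/|A|^2}{\big(\ln\frac{1}{R}V\big)^2} = \frac{|\nabla f|^2}{f^2}.
\end{equation*}
Since $\Delta f = 0$, the GSR potential~\eqref{GSR-potential} reduces to $\alpha(1-\alpha)\,|\nabla f|^2/f^2$, which is maximized at $\alpha=\tfrac12$, and Proposition~\ref{prop:GSR} with this choice gives exactly~\eqref{simplex-Hardy-critical}. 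Finally, because $\{V=0\}$ has codimension $2$, the density remark of Section~\ref{sec:preliminaries} extends the inequality from $C_0^\infty(\Omega)$ to $H^1_0(\Omega_R)$.

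The routine part is the bookkeeping of volumes and factorials; the only genuinely load-bearing step is the vanishing $\Delta_{\bx_k} f = 0$, which holds solely because the codimension is critical, i.e. because the factor $d-p-2$ with $p=d-2$ is zero. Had we used the power ground state $|A|^{-(d-N)}$ instead of the logarithm, both the potential and the gradient would degenerate at $d=N$, so the substitution of Theorem~\ref{thm:subspace-Hardy-2d} for Theorem~\ref{thm:subspace-Hardy} is precisely what the critical case requires.
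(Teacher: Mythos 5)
Your proposal is correct and takes essentially the same route as the paper, which simply declares the proof ``completely analogous'' to that of Theorem~\ref{thm:volume-noncritical} using $\nabla_{\bx_k} f = -(-1)^{\binom{d-1}{2}} (A_k \liprod A^{-1})$ and $\Delta_{\bx_k} f = 0$; you fill in exactly these computations, correctly reducing to the codimension-critical subspace case of Theorem~\ref{thm:subspace-Hardy-2d} with $p=d-2$, checking the factorial bookkeeping that identifies the GSR potential with $\frac{1}{4}\tilde{\Sigma}^{(d)}$, and invoking the codimension-$2$ density remark for the extension to $H^1_0(\Omega_R)$. No gaps.
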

\begin{proof}
	This is completely analogous to the proof of 
	Theorem \ref{thm:volume-noncritical}, using that
	$\nabla_{\bx_k} f = -(-1)^{\binom{d-1}{2}} (A_k \liprod A^{-1})$
	and $\Delta_{\bx_k} f=0$.
\end{proof}

	Now, whereas Theorem \ref{thm:parallel-no3d} 
	involved all possible quantities
	$|B_{ij}| = |(\bx_i - 0) \wedge (\bx_j - 0)|$
	among $N$ points, 
	i.e. (twice) the volumes of all 2-simplices 
	spanned by selections of points of the form 
	$\{\bx_i,\bx_j,0\}$, this can naturally be generalized to
	higher dimensions as follows.
	This is both a geometrically and combinatorially
	more complete generalization of 
	Theorem \ref{thm:many-particle-Hardy-3d},
	which corresponds to $p=2$:

\begin{thm}[Volumes of all simplices of $p$ points 
	among $N$ points in $d \neq p$]
	\label{thm:volume-all}
	Consider $\Omega := \{ (\bx_1,\ldots,\bx_N) \in \R^{dN} : \mathbb{V}(x) > 0 \}$
	where
	$$
		\mathbb{V}(x) := \frac{1}{((p-1)!)^{\binom{N}{p}}} \prod_{1 \le j_1 < \ldots < j_{p} \le N} 
		\big| (\bx_{j_1} - \bx_{j_{p}}) \wedge \ldots \wedge (\bx_{j_{p-1}} - \bx_{j_{p}}) \big|
	$$
	is the product of the volumes of all $p-1$-simplices in $\R^d$ 
	spanned by $p$ of the points $\{\bx_{j=1,\ldots,N}\}$,
	and take the ground state  
	$f := \left( ((p-1)!)^{\binom{N}{p}} \mathbb{V} \right)^{-(d-p)}$.

	Denote by $\Lambda = \Lambda(p,N)$ the set of ordered subsets 
	$\lambda = (\lambda_1,\ldots,\lambda_p) \subseteq \{1,\ldots,N\}$
	of $p$ elements out of $N$.
	For $\lambda \in \Lambda$ define the $p-1$-blade
	$A_\lambda := A(\bx_{\lambda_1},\ldots,\bx_{\lambda_p})$,
	and for each $k \in \lambda$ let $A_{\lambda,k}$ denote a $p-2$-blade s.t.
	$A_\lambda = (\bx_k - \bx_{\lambda_q}) \wedge A_{\lambda,k}$
	for some $\lambda_q \in \lambda \setminus k$.
	With
	$$
		\Sigma_1^{(p,N)}(x) 
		:= \sum_{k=1}^N \sum_{\lambda \in \Lambda \,:\, k \in \lambda} \left| A_{\lambda,k} \liprod A_\lambda^{-1} \right|^2
		= \sum_{\lambda \in \Lambda} \sum_{k \in \lambda} |A_{\lambda,k}|^2/|A_\lambda|^2,
	$$
	and
	$$
		\Sigma_2^{(p,N)}(x) := \sum_{k=1}^N \sum_{ \substack{ \lambda,\mu \in \Lambda \,: \\ k \in \lambda \neq \mu \ni k} } 
			\left( A_{\lambda,k} \liprod A_\lambda^{-1} \right)
			\cdot \left( A_{\mu,k} \liprod A_\mu^{-1} \right),
	$$
	we then have
	\begin{eqnarray}
		\int_\Omega |\nabla u|^2 \,dx
		- (d-p)^2 \int_\Omega \left( 
			\alpha(1-\alpha) \Sigma_1^{(p,N)} 
			- \alpha^2 \Sigma_2^{(p,N)} 
			\right) |u|^2 \,dx
			\nonumber \\
		= \int_\Omega |\nabla f^{-\alpha} u|^2 f^{2\alpha} \,dx  \ \ge \ 0,
		\label{many-particle-volume-GSR}
	\end{eqnarray}
	for all $u \in C_0^\infty(\Omega)$.
	Hence, if 
	$C_{d,N}^{(p)} := \sup_{x \in \Omega} \Sigma_2^{(p,N)}(x) / \Sigma_1^{(p,N)}(x)$,
	then
	\begin{equation}
		\int_\Omega |\nabla u|^2 \,dx
		- \frac{(d-p)^2}{4(1 + C_{d,N}^{(p)})} \int_\Omega \Sigma_1^{(p,N)} |u|^2 \,dx
		\ge \int_\Omega |\nabla f^{-\alpha} u|^2 f^{2\alpha} \,dx  \ \ge \ 0,
		\label{many-particle-volume-1}
	\end{equation}
	and,
	\begin{equation}
		\int_\Omega |\nabla u|^2 \,dx
		- \frac{(d-p)^2}{4C_{d,N}^{(p)}(1 + C_{d,N}^{(p)})} \int_\Omega \Sigma_2^{(p,N)} |u|^2 \,dx
		\ge \int_\Omega |\nabla f^{-\alpha} u|^2 f^{2\alpha} \,dx  \ \ge \ 0,
		\label{many-particle-volume-2}
	\end{equation}
	with $\alpha := \frac{1}{2(1 + C_{d,N}^{(p)})}$.
\end{thm}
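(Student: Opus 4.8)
The plan is to apply the ground state representation \eqref{GSR} of Proposition \ref{prop:GSR} to the product ground state $f = \prod_{\lambda \in \Lambda} |A_\lambda|^{-(d-p)}$ (dropping the irrelevant constant prefactor), reducing the computation of $\nabla_k f$ and $\Delta_k f$ to the single-blade results behind Theorems \ref{thm:subspace-Hardy} and \ref{thm:volume-noncritical}, exactly as Theorem \ref{thm:parallel-no3d} was built from its single-particle pieces.

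First I would compute the gradient. For fixed $k$, only the factors with $k \in \lambda$ contribute to $\nabla_k$, and for each such $\lambda$ the decomposition $A_\lambda = (\bx_k - \bx_{\lambda_q}) \wedge A_{\lambda,k}$ exhibits $|A_\lambda|$ as $|A_{\lambda,k}|$ times the distance from $\bx_k$ to the affine $(p-2)$-plane spanned by the remaining points of $\lambda$. This is precisely the situation of Theorem \ref{thm:subspace-Hardy}/\eqref{simplex-GSR}, giving $\nabla_{\bx_k} \log|A_\lambda| = \epsilon\,(A_{\lambda,k} \liprod A_\lambda^{-1})$ with a sign $\epsilon$ uniform over all $\lambda$ of the same size, and hence
$$
	\nabla_k f = -(d-p)\,\epsilon\,f \sum_{\lambda \ni k} (A_{\lambda,k} \liprod A_\lambda^{-1}).
$$
Squaring and summing over $k$ splits into diagonal terms ($\lambda = \mu$), which assemble into $\Sigma_1^{(p,N)}$, and off-diagonal terms ($\lambda \neq \mu$ both containing $k$), which give $\Sigma_2^{(p,N)}$; since $\epsilon^2 = 1$ the signs wash out, yielding $|\nabla f|^2 = (d-p)^2 f^2 (\Sigma_1^{(p,N)} + \Sigma_2^{(p,N)})$.

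Next comes the Laplacian, where the key is a cancellation of the diagonal contribution. Writing $h_\lambda := \nabla_k \log|A_\lambda|^{-(d-p)}$, so that $\nabla_k f = f \sum_{\lambda \ni k} h_\lambda$, one has $\Delta_k f = f(\sum_{\lambda \ni k} h_\lambda)^2 + f\sum_{\lambda \ni k}\nabla_k\cdot h_\lambda$. Each factor $|A_\lambda|^{-(d-p)}$ is the fundamental solution for the distance to a $(p-2)$-dimensional subspace and is therefore harmonic in $\bx_k$ precisely when $d \neq p$; this gives $\nabla_k \cdot h_\lambda = \Delta_k \log|A_\lambda|^{-(d-p)} = -|h_\lambda|^2$, so the diagonal part of $(\sum_{\lambda \ni k} h_\lambda)^2$ cancels against $\sum_{\lambda \ni k}|h_\lambda|^2$ and only the cross terms survive. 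Summing over $k$ produces $\Delta f = (d-p)^2 f\,\Sigma_2^{(p,N)}$, and substituting $|\nabla f|^2$ and $\Delta f$ into the GSR potential \eqref{GSR-potential} gives exactly $(d-p)^2(\alpha(1-\alpha)\Sigma_1^{(p,N)} - \alpha^2\Sigma_2^{(p,N)})$, i.e.\ \eqref{many-particle-volume-GSR}.

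Finally, \eqref{many-particle-volume-1} and \eqref{many-particle-volume-2} follow by inserting $\Sigma_2^{(p,N)} \le C_{d,N}^{(p)}\Sigma_1^{(p,N)}$ and optimizing, exactly as in Theorem \ref{thm:parallel-no3d}: replacing $-\alpha^2\Sigma_2^{(p,N)}$ by $-\alpha^2 C_{d,N}^{(p)}\Sigma_1^{(p,N)}$ leaves the coefficient $\alpha(1-\alpha(1+C_{d,N}^{(p)}))$ of $\Sigma_1^{(p,N)}$, maximal at $\alpha = 1/(2(1+C_{d,N}^{(p)}))$ with value $1/(4(1+C_{d,N}^{(p)}))$; using instead $\Sigma_1^{(p,N)} \ge \Sigma_2^{(p,N)}/C_{d,N}^{(p)}$ (with $\alpha(1-\alpha) \ge 0$) gives the constant $1/(4C_{d,N}^{(p)}(1+C_{d,N}^{(p)}))$ at the same $\alpha$. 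The main obstacle is the geometric-algebra bookkeeping: verifying that the single-blade gradient $\nabla_{\bx_k}\log|A_\lambda| = \epsilon\,(A_{\lambda,k}\liprod A_\lambda^{-1})$ holds with a sign independent of the auxiliary choice of $\lambda_q$ in $A_\lambda = (\bx_k - \bx_{\lambda_q})\wedge A_{\lambda,k}$, so that both the diagonal sum in $|\nabla f|^2$ and the surviving cross terms in $\Delta f$ reproduce $\Sigma_1^{(p,N)}$ and $\Sigma_2^{(p,N)}$ precisely — the content deferred to Appendix A.
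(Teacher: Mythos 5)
Your proposal is correct and follows essentially the same route as the paper: the product ground state $f=\prod_{\lambda}|A_\lambda|^{-(d-p)}$, harmonicity $\Delta_k|A_\lambda|^{-(d-p)}=0$ of each factor (so only cross terms survive in $\Delta f$, giving $(d-p)^2 f\,\Sigma_2^{(p,N)}$), the blade-gradient identity \eqref{blade-gradient} from Appendix A to identify $\nabla_k\log|A_\lambda|$ with $\pm A_{\lambda,k}\liprod A_\lambda^{-1}$, and the same optimization in $\alpha$ after inserting the bound via $C_{d,N}^{(p)}$. Your logarithmic-derivative bookkeeping ($h_\lambda$ with $\nabla_k\cdot h_\lambda=-|h_\lambda|^2$) is merely a cosmetic repackaging of the paper's direct differentiation of the product, and your flagged sign concern is exactly what \eqref{blade-gradient} settles.
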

\noindent
	Note that by Cauchy-Schwarz in $\R^d$,
	\begin{eqnarray*}
		\Sigma_2^{(p,N)} 
		&\le& \sum_k \sum_{\lambda \ni k} \sum_{\substack{\mu \ni k \\ \mu \neq \lambda}}
		|A_{\lambda,k} \liprod A_\lambda^{-1}| |A_{\mu,k} \liprod A_\mu^{-1}| \\
		&\le& \sum_k \sum_{\lambda \ni k} \sum_{\substack{\mu \ni k \\ \mu \neq \lambda}}
		\frac{1}{2}\left( |A_{\lambda,k} \liprod A_\lambda^{-1}|^2 
			+ |A_{\mu,k} \liprod A_\mu^{-1}|^2 \right) \\
		&=& \sum_k \sum_{\lambda \ni k} 
			|A_{\lambda,k} \liprod A_\lambda^{-1}|^2
			\cdot \sum_{\substack{\mu \ni k \\ \mu \neq \lambda}} 1
			\ = \left( \binom{N-1}{p-1} - 1 \right) \Sigma_1^{(p,N)},
	\end{eqnarray*}
	hence
	$C_{d,N}^{(p)} \le \binom{N-1}{p-1} - 1$.
	Furthermore, as in the single-volume case
	one finds that $\codim \Omega^c = d - p + 2$.
	We therefore have the following bound for the Laplacian
	in terms of a potential which can be interpreted as
	a geometrically generalized many-particle interaction:

\begin{cor}
	For $d > p$ we have the generalized many-particle Hardy inequality
	\begin{equation} \label{general-Hardy}
		\int\limits_{\R^{dN}} \!\! |\nabla u|^2 \,dx
		\ge \frac{(d-p)^2}{4(1 + C_{d,N}^{(p)})} \int\limits_{\R^{dN}} \!
			\sum_{\lambda \in \Lambda(p,N)} \!\!\!
			\frac{ \sum_{k \in \lambda} 
				V(\bx_{\lambda_1},\ldots,\check{\bx}_{k},\ldots,\bx_{\lambda_p})^2 
				}{ (p-1)^2 \, V(\bx_{\lambda_1},\ldots,\bx_{\lambda_p})^2 } |u|^2 \,dx,
	\end{equation}
	for all $u \in H^1(\R^{dN})$.
	The inequality holds for $u \in H^1_0(\Omega)$ when $d=p-1$.
\end{cor}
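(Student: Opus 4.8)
The plan is to first establish Theorem~\ref{thm:volume-all} by the same product-of-ground-states mechanism used for Theorems~\ref{thm:many-particle-Hardy-3d} and~\ref{thm:parallel-no3d}, and then read off the corollary by rewriting $\Sigma_1^{(p,N)}$ in terms of simplex volumes and passing from $C_0^\infty(\Omega)$ to the appropriate Sobolev space. First I would write $f = \prod_{\lambda \in \Lambda} |A_\lambda|^{-(d-p)}$, so that $f$ depends on $\bx_k$ only through those $A_\lambda$ with $k \in \lambda$. For each such $\lambda$ the factorization $A_\lambda = (\bx_k - \bx_{\lambda_q}) \wedge A_{\lambda,k}$, with $A_{\lambda,k}$ a $(p-2)$-blade not involving $\bx_k$, exhibits $|A_\lambda|$ as $|A_{\lambda,k}|$ times the distance from $\bx_k$ to the affine $(p-2)$-plane $\bx_{\lambda_q} + \bar{A}_{\lambda,k}$, of codimension $d-p+2$. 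This is exactly the single-subspace situation of Theorem~\ref{thm:subspace-Hardy}, so the Appendix~A computation yields, up to a fixed sign $(-1)^{\binom{p-1}{2}}$,
\[
\nabla_{\bx_k} |A_\lambda|^{-(d-p)} = \mp (d-p)\,(A_{\lambda,k} \liprod A_\lambda^{-1})\,|A_\lambda|^{-(d-p)}, \qquad \Delta_{\bx_k} |A_\lambda|^{-(d-p)} = 0,
\]
the vanishing of the Laplacian holding precisely because $d-p+2 \neq 2$, i.e. $d \neq p$.

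Next I would assemble the full derivatives by the logarithmic product rule, writing $g_\lambda := \nabla_{\bx_k}\log|A_\lambda|^{-(d-p)} = \mp(d-p)(A_{\lambda,k}\liprod A_\lambda^{-1})$ for $\lambda \ni k$. Summing over such $\lambda$ gives $\nabla_{\bx_k} f = f\sum_{\lambda \ni k} g_\lambda$; squaring and summing over $k$, the diagonal terms reproduce the summand of $\Sigma_1^{(p,N)}$ and the cross terms that of $\Sigma_2^{(p,N)}$, so $|\nabla f|^2 = (d-p)^2 f^2(\Sigma_1^{(p,N)} + \Sigma_2^{(p,N)})$. For the Laplacian I would use $\Delta_{\bx_k} f = f\,\big|\sum_{\lambda \ni k} g_\lambda\big|^2 + f\sum_{\lambda \ni k}\nabla_{\bx_k}\!\cdot g_\lambda$, where each $\nabla_{\bx_k}\!\cdot g_\lambda = \Delta_{\bx_k}|A_\lambda|^{-(d-p)}/|A_\lambda|^{-(d-p)} - |g_\lambda|^2 = -|g_\lambda|^2$ since the per-subset Laplacian vanishes. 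The diagonal part of the first term cancels this second sum exactly, leaving only the off-diagonal cross terms, so that $\Delta f = (d-p)^2 f\,\Sigma_2^{(p,N)}$.

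Substituting these two identities into the GSR potential~\eqref{GSR-potential} of Proposition~\ref{prop:GSR} produces $(d-p)^2(\alpha(1-\alpha)\Sigma_1^{(p,N)} - \alpha^2\Sigma_2^{(p,N)})$, which is~\eqref{many-particle-volume-GSR}, the remaining integral being manifestly non-negative. Restricting to $0 \le \alpha \le 1$ and bounding $\Sigma_2^{(p,N)} \le C_{d,N}^{(p)}\Sigma_1^{(p,N)}$ turns the potential into $(d-p)^2\alpha(1-\alpha(1+C_{d,N}^{(p)}))\Sigma_1^{(p,N)}$, and maximizing this quadratic in $\alpha$ gives $\alpha = \tfrac{1}{2(1+C_{d,N}^{(p)})}$ and the constant $\tfrac{(d-p)^2}{4(1+C_{d,N}^{(p)})}$ of~\eqref{many-particle-volume-1}; the symmetric bound $\Sigma_1^{(p,N)} \ge \Sigma_2^{(p,N)}/C_{d,N}^{(p)}$ gives~\eqref{many-particle-volume-2}. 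To obtain the corollary I would then insert $|A_\lambda| = (p-1)!\,V(\bx_{\lambda_1},\ldots,\bx_{\lambda_p})$ and $|A_{\lambda,k}| = (p-2)!\,V(\bx_{\lambda_1},\ldots,\check{\bx}_k,\ldots,\bx_{\lambda_p})$ into $\Sigma_1^{(p,N)} = \sum_\lambda\sum_{k\in\lambda}|A_{\lambda,k}|^2/|A_\lambda|^2$, which converts the factor $|A_{\lambda,k}|^2/|A_\lambda|^2$ into $V(\ldots\check{\bx}_k\ldots)^2/((p-1)^2 V(\bx_{\lambda_1},\ldots,\bx_{\lambda_p})^2)$, reproducing the stated integrand; finally, since $\codim\Omega^c = d-p+2$, a density argument (as in the Remark in Section~\ref{sec:preliminaries}) extends~\eqref{many-particle-volume-1} from $C_0^\infty(\Omega)$ to $H^1(\R^{dN})$ when $d>p$ (codimension $\ge 3$) and to the closure $H^1_0(\Omega)$ when $d=p-1$ (codimension $1$).

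I expect the only real work to lie in the geometric-algebra identities of the first paragraph --- the contraction form of $\nabla_{\bx_k}|A_\lambda|$ and the harmonicity $\Delta_{\bx_k}|A_\lambda|^{-(d-p)}=0$ --- but these are identical in character to the computations already performed for Theorems~\ref{thm:subspace-Hardy} and~\ref{thm:volume-noncritical} in Appendix~A, now applied one subset $\lambda$ at a time, with permutation-invariance of $|A_\lambda|$ guaranteeing that the choice of $\lambda_q \in \lambda\setminus\{k\}$ is immaterial. The genuinely new step is purely combinatorial bookkeeping: summing over all $\binom{N}{p}$ subsets and separating diagonal ($\Sigma_1^{(p,N)}$) from off-diagonal ($\Sigma_2^{(p,N)}$) contributions in both $|\nabla f|^2$ and $\Delta f$, after which the optimization in $\alpha$ is the same one-variable maximization used throughout Section~\ref{sec:many-particle}.
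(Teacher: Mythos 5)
Your proposal is correct and follows essentially the same route as the paper: the product ground state $f=\prod_{\lambda}|A_\lambda|^{-(d-p)}$ with per-blade harmonicity $\Delta_{\bx_k}|A_\lambda|^{-(d-p)}=0$ from Appendix A, the diagonal/off-diagonal split yielding $|\nabla f|^2=(d-p)^2f^2(\Sigma_1^{(p,N)}+\Sigma_2^{(p,N)})$ and $\Delta f=(d-p)^2f\,\Sigma_2^{(p,N)}$, the optimization $\alpha=\tfrac{1}{2(1+C_{d,N}^{(p)})}$, the conversion $|A_{\lambda,k}|^2/|A_\lambda|^2 = V(\ldots\check{\bx}_k\ldots)^2/((p-1)^2V^2)$, and the codimension-$(d-p+2)$ density argument. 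Your logarithmic-derivative bookkeeping for $\Delta_{\bx_k}f$ (with the diagonal cancellation via $\nabla_{\bx_k}\cdot g_\lambda=-|g_\lambda|^2$) is just a reorganization of the paper's direct product-rule expansion, in which the diagonal terms never appear because each factor is harmonic.
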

\begin{proof}
	We have $f = \prod_{\lambda \in \Lambda(p,N)} |A_\lambda|^{-(d-p)}$
	with $\Delta_k |A_\lambda|^{-(d-p)} = 0$ on $\Omega$ $\forall k,\lambda$,
	and hence
	\begin{eqnarray*}
		\nabla_k f &=& \sum_{\lambda \in \Lambda(p,N) \,:\, k \in \lambda} 
			\nabla_k |A_\lambda|^{-(d-p)}
			\prod_{\mu \in \Lambda(p,N) \,:\, \mu \neq \lambda} |A_\mu|^{-(d-p)} \\
		&=& -\frac{(d-p)}{2} f 
			\sum_{\lambda \in \Lambda(p,N) \,:\, k \in \lambda} 
			|A_\lambda|^{-2} \nabla_k |A_\lambda|^2,
	\end{eqnarray*}
	and
	\begin{eqnarray*}
		\Delta_k f &=& 
			\sum_{\lambda \ni k} \nabla_k |A_\lambda|^{-(d-p)} 
			\cdot
			\sum_{\substack{\mu \ni k \,:\\ \mu \neq \lambda}} \nabla_k |A_\mu|^{-(d-p)} 
			\prod_{\nu \neq \lambda,\mu} |A_\nu|^{-(d-p)} \\
		&=& \frac{(d-p)^2}{4} f 
			\sum_{\lambda \ni k} \sum_{\substack{\mu \ni k \,:\\ \mu \neq \lambda}} 
			|A_\lambda|^{-2} \nabla_k |A_\lambda|^2 \cdot |A_\mu|^{-2} \nabla_k |A_\mu|^2.
	\end{eqnarray*}
	Then, again using \eqref{blade-gradient} from Appendix A,
	$$
		\nabla_k |A_\lambda|^2 
		= \nabla_k |(\bx_k - \bx_{\lambda_q}) \wedge A_{\lambda,k}|^2
		= (-1)^{\binom{p-1}{2}} 2A_{\lambda,k} \liprod A_\lambda,
	$$
	we therefore obtain
	$$
		|\nabla f|^2 = \sum_k |\nabla_k f|^2 
		= (d-p)^2 f^2 \left( \Sigma_1^{(p,N)} + \Sigma_2^{(p,N)} \right)
	$$
	and
	$$
		\Delta f = \sum_k \Delta_k f = (d-p)^2 f \Sigma_2^{(p,N)}.
	$$
	The GSR \eqref{many-particle-volume-GSR}
	and inequalities \eqref{many-particle-volume-1}-\eqref{many-particle-volume-2}
	now follow as in the earlier theorems.
\end{proof}

	For the optimal large-$N$ dependence of the constants in these
	many-particle inequalities, it becomes relevant to
	study the ratio of the geometric quantities w.r.t. the
	optimal asymptotic probability distribution $\rho$ of points in $\R^d$;
	$$
		C_d^{(p,q)} := \sup_{\rho \ge 0 \ :\, \int_{\R^d} d\rho(\bx) = 1}
		\frac{ \int_{\R^{qd}} \Xi^{(p,q)}(\bx_1,\ldots,\bx_{q}) \prod_{k=1}^{q} d\rho(\bx_k) 
			}{ \int_{\R^{pd}} \Sigma^{(p)}(\bx_1,\ldots,\bx_{p}) \prod_{k=1}^p d\rho(\bx_k) },
	$$
	where $q = p+1,\ldots,2p-1$,
	$\Sigma^{(p)}$ was defined in \eqref{simplex-volume-ratio}, 
	and
	$$
		\Xi^{(p,q)} := \sum_{ \substack{ 1 \in \lambda,\mu \in \Lambda(p,q) \\ 
				:\, |\lambda \cap \mu| = 2p-q } } 
			\sum_{\pi \in S_{q}}
			\left( A_{\pi(\lambda),\pi(1)} \liprod A_{\pi(\lambda)}^{-1} \right)
			\cdot \left( A_{\pi(\mu),\pi(1)} \liprod A_{\pi(\mu)}^{-1} \right)
	$$
	are higher-dimensional generalizations of a single circumradius 
	contribution $R_{123}^{-2}$.
	Related optimizations involving lower-dimensional 
	geometric quantities have been discussed in
	Remarks 2.2(iv) in \cite{H-O2-Laptev-Tidblom:08} 
	(see also \cite{Tidblom:thesis}),
	Section 3.1 in \cite{Frank-H-O-Laptev-Solovej},
	and also e.g. Eqn. (2) in \cite{Nam:10}.

	It is of course possible to generalize Theorem \ref{thm:volume-all} 
	even further and
	consider the volumes of \emph{all} simplices among $N$ points 
	(i.e. all simplex dimensions simultaneously),
	including the case of critical codimension.
	We will not state the corresponding theorem explicitly here.

\section{Conclusions} \label{sec:conclusions}

	We have studied operator inequalities for the Laplacian,
	i.e. uncertainty principles, involving many-particle interaction
	potentials of increasingly generalized geometric forms.
	Such interactions appear e.g. in membrane matrix models 
	\cite{Lundholm:thesis}
	and in higher-dimensional generalized 
	Calogero-Sutherland models such as those studied in 
	\cite{Calogero-Marchioro:73,Murthy-Bhaduri-Sen:96,Ghosh-Rao:98,Calogero:99},
	and the bounds provided here address the issue of whether the
	spectrum of the corresponding models is bounded from below independently
	of the choice of external potentials, i.e. whether the uncertainty
	principle is strong enough to prevent many-body collapse in these cases.
	Furthermore, the explicit ground state representations provided in these bounds will be of use in the spectral analysis of the associated operators.
	We have furthermore illustrated the novel use of techniques from 
	geometric algebra to conveniently facilitate these types of technically 
	and geometrically complicated computations.

\section*{Appendix A: Some applications of geometric algebra}

	The \emph{geometric algebra} over $\R^d$ 
	is the exterior algebra $\bigwedge \R^d$
	together with the left- and right-interior products 
	$(A,B) \mapsto A \liprod B$, $(A,B) \mapsto A \riprod B$
	and the associative Clifford product $(A,B) \mapsto AB$,
	which are all inherited from the usual Euclidean scalar product 
	$(\bx,\by) \mapsto \bx \cdot \by$.
	A $p$-\emph{blade} $A$ is an exterior product of $p$ vectors
	and is uniquely determined by its corresponding 
	$p$-dimensional subspace $\bar{A} \subseteq \R^d$,
	orientation, and magnitude $|A| := \sqrt{AA^\dagger}$,
	where $A^\dagger := (-1)^{\binom{p}{2}}A$ 
	is the \emph{reverse} of $A$.
	Note that if $A,B$ are blades then their
	exterior and interior products $A \wedge B$ and $A \liprod B$
	are blades as well, and that 
	$A \liprod B = AB$ if $\bar{A} \subseteq \bar{B}$
	(see e.g. Section 3 in \cite{Lundholm-Svensson:09}).

	If $A$ is a $p$-blade then we have for the gradient
	\begin{equation} \label{blade-gradient}
		\nabla_{\bx} |\bx \wedge A|^2 
		= 2(\bx \wedge A)A^\dagger
		= 2A^\dagger(A \wedge \bx)
		= (-1)^{\binom{p+1}{2}}2A \liprod (\bx \wedge A).
	\end{equation}
	One way to see this is to note that it is trivially true for
	$A = 0$, and that for $A \neq 0$ we can write any point 
	$\bx$ in $\R^d$ uniquely as
	$$
		\bx = \bx A A^{-1} = (\bx \liprod A)A^{-1} + (\bx \wedge A)A^{-1}
		= \bx_{\parallel} + \bx_{\perp},
	$$
	where $A^{-1} = A^\dagger/|A|^2$,
	$\bx_{\parallel} := P_{\bar{A}}\bx$ is the orthogonal
	projection on $\bar{A}$ 
	and $\bx_{\perp} := (1-P_{\bar{A}})\bx$ 
	(the so-called \emph{rejection} on $\bar{A}$;
	see e.g. Section 3.3 in \cite{Lundholm-Svensson:09}). 
	Hence,
	\begin{eqnarray*}
		\nabla_{\bx} |\bx \wedge A|^2 
		&=& \nabla_{\bx} |(\bx \wedge A)A^{-1}|^2 |A|^2 \\
		&=& \nabla_{\bx_{\parallel}} |\bx_{\perp}|^2 |A|^2
			+ \nabla_{\bx_{\perp}} |\bx_{\perp}|^2 |A|^2 \\
		&=& 0 + 2\bx_{\perp} |A|^2 
		\ = \ 2(\bx \wedge A)A^{\dagger}.
	\end{eqnarray*}
	The other equalities in \eqref{blade-gradient} follow
	by taking the reverse.

	It then also follows that
	$$
		\left| \nabla_{\bx} |\bx \wedge A|^2 \right|^2
		= 4 \left( (\bx \wedge A)A^\dagger \right) \left( (\bx \wedge A)A^\dagger \right)^\dagger
		= 4|A|^2|\bx \wedge A|^2,
	$$
	\begin{eqnarray*}
		\nabla_{\bx} |\bx \wedge A|^2 \cdot \nabla_{\by} |\by \wedge B|^2 
		&=& 4((\bx \wedge A)A^\dagger) \cdot ((\by \wedge B)B^\dagger) \\
		&=& 4(A^\dagger \liprod (A \wedge \bx)) \cdot (B^\dagger \liprod (B \wedge \by)) \\
		&=& (-1)^{\binom{p+1}{2} + \binom{q+1}{2}}
			4(A \liprod (\bx \wedge A)) \cdot (B \liprod (\by \wedge B)),
	\end{eqnarray*}
	for $A,B$ $p$- resp. $q$-blades,
	and, by e.g. Exercise 3.6 in \cite{Lundholm-Svensson:09},
	$$
		\Delta_{\bx} |\bx \wedge A|^2 
		= 2 \nabla_{\bx} (\bx \wedge A)A^\dagger
		= 2 \sum_{k=1}^d e_k(e_k \wedge A)A^\dagger
		= 2(d-p)|A|^2,
	$$
	as well as
	(outside the support of the corresponding distribution)
	\begin{eqnarray*}
		\Delta_{\bx} |\bx \wedge A|^{2\beta} 
		&=& \beta(\beta-1) |\bx \wedge A|^{2\beta-4} \left| \nabla_{\bx}|\bx \wedge A|^2 \right|^2 \\
		&& \quad	+\ \beta |\bx \wedge A|^{2\beta-2} \Delta_{\bx}|\bx \wedge A|^2 \\
		&=& 2\beta(2\beta + d-p-2) |A|^2 |\bx \wedge A|^{2\beta-2},
	\end{eqnarray*}
	which is zero for $\beta = -(d-p-2)/2$.
	Furthermore,
	(again on a suitable domain)
	$$
		\nabla_{\bx} \ln \frac{1}{R}|\bx \wedge A| 
		= \frac{1}{2}|\bx \wedge A|^{-2} \nabla_{\bx} |\bx \wedge A|^2
		= (\bx \wedge A)^{-1} A^\dagger,
	$$
	$$
		\Delta_{\bx} \ln \frac{1}{R}|\bx \wedge A| 
		= \frac{1}{2}\nabla_{\bx} \cdot \left( |\bx \wedge A|^{-2} \nabla_{\bx} |\bx \wedge A|^2 \right)
		= (d-p-2)|A|^2|\bx \wedge A|^{-2}.
	$$
	
	Lastly, we have the following explicit invariance of the simplex
	volume under permutations of the points:
	
\begin{prop} 
	Let
	$A := \bigwedge_{1 \le j < N} (\bx_j - \bx_N)$ 
	be the $N-1$-blade associated to the points
	$(\bx_1,\ldots,\bx_N)$. 
	Then $A$ is invariant, up to a sign, under any permutation of the points.
\end{prop}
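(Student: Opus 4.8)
The plan is to reduce to transpositions. Since $S_N$ is generated by transpositions and the sign of a composite permutation is the product of the signs picked up at each step, it suffices to show that $A$ changes by exactly $-1$ under every transposition of the points; this will in fact yield the sharper statement $A(\bx_{\pi(1)},\ldots,\bx_{\pi(N)}) = \sgn(\pi)\,A(\bx_1,\ldots,\bx_N)$. I would abbreviate $\ba_k := \bx_k - \bx_N$, so that $A = \ba_1 \wedge \cdots \wedge \ba_{N-1}$, and then distinguish transpositions $(i\,j)$ with $i,j<N$ from those of the form $(i\,N)$, since the defining expression for $A$ privileges the last point $\bx_N$.

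The case $(i\,j)$ with $i,j<N$ is immediate: it merely interchanges the two factors $\ba_i$ and $\ba_j$ in the exterior product, so by the alternating property $A \mapsto -A$. The substantive case, and the one I expect to require the most care, is a transposition $\pi = (i\,N)$ involving the privileged point. Here the permuted blade is $A_\pi = \bigwedge_{j=1}^{N-1}(\bx_{\pi(j)} - \bx_i)$; its $j=i$ factor equals $\bx_N - \bx_i = -\ba_i$, while for $j \neq i$ the factor is $\bx_j - \bx_i = \ba_j - \ba_i$. The key observation is that, because $-\ba_i$ is itself one of the factors, the alternating property lets me add a multiple of it to each of the others, replacing every $\ba_j - \ba_i$ by $\ba_j$ without changing the product; what remains is $\ba_1 \wedge \cdots \wedge (-\ba_i) \wedge \cdots \wedge \ba_{N-1} = -A$. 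The one thing to watch is that the factor $-\ba_i$ occupies precisely the $i$-th slot, so no further reordering sign is incurred, and together with the first case this gives the claim.

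A cleaner, case-free route I would note as an alternative is to homogenize. Setting $\hat{\bx}_j := \bx_j + e_{d+1} \in \R^{d+1}$, where $e_{d+1}$ spans the added dimension, subtraction of the last factor gives $\hat{\bx}_1 \wedge \cdots \wedge \hat{\bx}_N = A \wedge \hat{\bx}_N = A \wedge \bx_N + A \wedge e_{d+1}$, so that $A$ is recovered as the $e_{d+1}$-component of the $N$-blade $\hat{\bx}_1 \wedge \cdots \wedge \hat{\bx}_N$. Since this blade is an exterior product of the $\hat{\bx}_j$, it is fully alternating and picks up a factor $\sgn(\pi)$ under any $\pi \in S_N$; extracting the $e_{d+1}$-component then yields $A \mapsto \sgn(\pi)\,A$ at once, both proving the proposition and confirming the sign found by the direct argument.
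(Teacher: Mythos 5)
Your proposal is correct, and its primary argument is essentially the paper's: the paper also splits a general permutation into those fixing $\bx_N$ (handled by antisymmetry of the exterior product) and a swap of some $\bx_k$ with the privileged base point $\bx_N$, which it handles by writing $\bx_j - \bx_N = (\bx_j - \bx_k) + (\bx_k - \bx_N)$ and absorbing the second summand using the presence of the factor $(\bx_k - \bx_N)$ --- exactly the ``add a multiple of an existing factor'' manipulation you perform in your transposition case $(i\,N)$. You organize the reduction by transpositions rather than by the paper's base-change identity $A = (-1)^{N-k}\bigwedge_{j \neq k}(\bx_j - \bx_k)$, and in doing so you track the precise sign, yielding the sharper conclusion $A \mapsto \sgn(\pi)A$ where the paper only asserts invariance up to an unspecified sign; both computations are equivalent in substance. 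Your homogenization alternative, however, is a genuinely different route not taken in the paper: embedding $\hat{\bx}_j := \bx_j + e_{d+1} \in \R^{d+1}$ makes the $N$-blade $\hat{\bx}_1 \wedge \cdots \wedge \hat{\bx}_N$ manifestly alternating, and $A$ is recovered as its $e_{d+1}$-component via the decomposition of $\bigwedge^N \R^{d+1}$ into $\bigwedge^N \R^d$ plus $\bigl(\bigwedge^{N-1}\R^d\bigr) \wedge e_{d+1}$. This buys a case-free, one-line proof with the exact sign $\sgn(\pi)$, at the modest cost of the auxiliary dimension; the paper's direct computation, in exchange, stays inside $\R^d$ and produces the explicit re-basing identity, which it actually needs elsewhere (in the proof of Theorem \ref{thm:volume-noncritical}, where the blade with base point $\bx_{N-1}$ is used for the $k=N$ gradient), so the two approaches are complementary rather than interchangeable in context.
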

\begin{proof}	
	$A$ is due to the total antisymmetry of the exterior product clearly 
	invariant under any permutation $\sigma$ of
	$(\bx_1,\ldots,\bx_{N-1})$, up to a sign ($\sgn \sigma$).
	Furthermore, for any $1 \le k < N$,
	\begin{eqnarray*}
		\lefteqn{ A = (\bx_1 - \bx_N) \wedge \ldots \wedge (\bx_k - \bx_N) \wedge \ldots \wedge (\bx_{N-1} - \bx_N) = }\\
		&=& \bigwedge_{j=1}^{k-1} (\bx_j - \bx_k + \bx_k - \bx_N) \wedge (\bx_k - \bx_N) \wedge \bigwedge_{j=k+1}^{N-1} (\bx_j - \bx_k + \bx_k - \bx_N) \\
		&=& \bigwedge_{j=1}^{k-1} (\bx_j - \bx_k) \wedge (\bx_k - \bx_N) \wedge \bigwedge_{j=k+1}^{N-1} (\bx_j - \bx_k) \\
		&=& (-1)^{N-k} \bigwedge_{\substack{j=1 \\ j \neq k}}^N (\bx_j - \bx_k),
	\end{eqnarray*}
	by multilinearity and antisymmetry.
	The proposition then follows by composition of permutations.
\end{proof}

\section*{Appendix B: Sharpness of derived constants}

	For completeness, we briefly note in this appendix how 
	the explicit ground states $f$ in the derived Hardy GSRs
	also can be used to determine sharpness of the constants
	in the corresponding Hardy inequalities.

\begin{lem} \label{lem:sharpness}
	Suppose that $\Omega$ and $f$ in Proposition \ref{prop:GSR} are
	such that $\overline{\Omega} = \R^n$ and
	\begin{itemize}
		\item[i)] $\int_{\R^n} f^{1-\delta} e^{-|x|} \, dx$ \quad
			is uniformly bounded for small $\delta > 0$,
		\item[ii)] $\int_{\R^n} \frac{|\nabla f|^2}{f^2}f^{1-\delta} e^{-|x|} \, dx$ \quad
			is finite for small $\delta > 0$, 
			but $\to \infty$ as $\delta \to 0$.
	\end{itemize}
	Then, for every $\epsilon > 0$ there exists 
	$u_\delta := f^{\frac{1}{2}} f^{-\frac{\delta}{2}} e^{-\frac{1}{2}|x|} \in H^1(\R^n)$,
	with $\delta > 0$, s.t.
	\begin{equation} \label{Hardy-sharpness}
		\int_{\R^n} |\nabla u_\delta|^2 \,dx
		\le \left(\frac{1}{4} + \epsilon\right) 
			\int_{\R^n} \frac{|\nabla f|^2}{f^2} |u_\delta|^2 \,dx.
	\end{equation}
\end{lem}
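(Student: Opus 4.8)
The plan is to substitute the explicit family $u_\delta$ into both sides of \eqref{Hardy-sharpness}, compute the two integrals exactly, and show that their ratio tends to $\frac14$ as $\delta \to 0^+$; for any $\epsilon > 0$ a sufficiently small $\delta$ then gives the claim, which in turn shows the constant $\frac14$ in the corresponding Hardy GSR cannot be improved. Write $u_\delta = f^{\beta} e^{-\frac12 |x|}$ with $\beta := \frac{1-\delta}{2}$, so that $|u_\delta|^2 = f^{1-\delta} e^{-|x|}$ and the right-hand integrand is exactly the one in hypothesis (ii). I abbreviate
\[
	I_1(\delta) := \int_{\R^n} \frac{|\nabla f|^2}{f^2} f^{1-\delta} e^{-|x|}\, dx, \qquad
	I_0(\delta) := \int_{\R^n} f^{1-\delta} e^{-|x|}\, dx,
\]
the integrals appearing in (ii) and (i) respectively.

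The first step is the gradient computation. By the product rule,
\[
	\nabla u_\delta = \left( \beta \frac{\nabla f}{f} - \frac12 \frac{x}{|x|} \right) u_\delta,
\]
so that, expanding the square,
\[
	|\nabla u_\delta|^2 = \left( \beta^2 \frac{|\nabla f|^2}{f^2} - \beta \frac{\nabla f \cdot x}{f\,|x|} + \frac14 \right) f^{1-\delta} e^{-|x|}.
\]
Integrating and setting $I_{\mathrm{cr}}(\delta) := \int_{\R^n} \frac{\nabla f \cdot x}{f\,|x|} f^{1-\delta} e^{-|x|}\, dx$, I obtain the exact identity
\[
	\int_{\R^n} |\nabla u_\delta|^2\, dx = \beta^2 I_1(\delta) - \beta\, I_{\mathrm{cr}}(\delta) + \tfrac14 I_0(\delta).
\]

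The second step is to divide by the right-hand side $I_1(\delta)$ and let $\delta \to 0^+$. Since $|x|/|x| = 1$ gives $|\nabla f \cdot x|/|x| \le |\nabla f|$, Cauchy--Schwarz applied after the splitting $f^{1-\delta}e^{-|x|} = \bigl(f^{(1-\delta)/2}e^{-|x|/2}\bigr)^2$ yields $|I_{\mathrm{cr}}(\delta)| \le \sqrt{I_0(\delta)\, I_1(\delta)}$. Hence
\[
	\frac{\int_{\R^n} |\nabla u_\delta|^2\, dx}{\int_{\R^n} \frac{|\nabla f|^2}{f^2}|u_\delta|^2\, dx}
	= \beta^2 - \beta\, \frac{I_{\mathrm{cr}}(\delta)}{I_1(\delta)} + \frac14\, \frac{I_0(\delta)}{I_1(\delta)},
	\qquad \left| \frac{I_{\mathrm{cr}}(\delta)}{I_1(\delta)} \right| \le \sqrt{\frac{I_0(\delta)}{I_1(\delta)}}.
\]
By hypotheses (i)--(ii), $I_0(\delta)$ stays bounded while $I_1(\delta) \to \infty$, so both $I_0/I_1 \to 0$ and $I_{\mathrm{cr}}/I_1 \to 0$; together with $\beta \to \frac12$ the ratio converges to $\frac14$. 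Choosing $\delta$ small enough that the ratio lies below $\frac14 + \epsilon$ then proves \eqref{Hardy-sharpness}.

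Finally, I would verify $u_\delta \in H^1(\R^n)$: for small $\delta > 0$ the integrals $I_0$, $I_1$, $I_{\mathrm{cr}}$ are all finite (by (i), (ii), and the Cauchy--Schwarz bound), so $\|u_\delta\|_{L^2}^2$ and $\|\nabla u_\delta\|_{L^2}^2$ are finite. The only genuine subtlety---and the step I expect to require the most care---is that the pointwise gradient formula above must coincide with the \emph{weak} gradient on all of $\R^n$, i.e.\ that the singular set $\Omega^c$ contributes no distributional term. This is precisely where $\overline{\Omega} = \R^n$ (together with $\codim \Omega^c \ge 2$ for the sets at hand) enters, via the capacity/density argument already noted in the Remark of Section \ref{sec:preliminaries}, which justifies computing with functions in $C_0^\infty(\Omega)$ and passing to the limit.
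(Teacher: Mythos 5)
Your proof is correct and follows essentially the same route as the paper: plug the explicit trial function $u_\delta$ into both sides, observe that the leading term is $\beta^2\,\frac{|\nabla f|^2}{f^2}|u_\delta|^2$ with $\beta\to\frac12$, and kill the remaining terms using that hypothesis \emph{i)} keeps $I_0(\delta)$ bounded while \emph{ii)} forces $I_1(\delta)\to\infty$. The only cosmetic difference is that you expand $|\nabla u_\delta|^2$ exactly and control the cross term by Cauchy--Schwarz, whereas the paper applies the pointwise triangle inequality followed by Minkowski in $L^2$ --- equivalent estimates --- and your closing remark about the weak gradient across $\Omega^c$ (via $\overline{\Omega}=\R^n$ and the codimension/density remark) addresses a point the paper treats with the same brevity.
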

\noindent
	If $f(x) \to 0$ as $x \to \partial\Omega$,
	then we reverse the sign on $\delta$ above and note that
	$u_\delta \in H^1_0(\Omega)$ for small $\delta$.
\begin{proof}
	Using that $u_\delta \in C^1(\Omega \setminus \{0\})$
	and
	$$
		|\nabla u_\delta| 
		= \left| \frac{\nabla f}{2f} u_\delta 
			- \frac{\delta \nabla f}{2f} u_\delta
			- \frac{\nabla|x|}{2} u_\delta \right|
		\le \frac{1 + |\delta|}{2} \frac{|\nabla f|}{f} |u_\delta|
		+ \frac{1}{2} |u_\delta|
	$$
	on $\Omega \setminus \{0\}$,
	we find by \emph{i)} and \emph{ii)} that $u_\delta \in H^1(\R^n)$
	for sufficiently small $\pmp\delta > 0$, and that
	\begin{eqnarray*}
		\| \nabla u_\delta \|_{L^2} 
		&\le& \left( \frac{1}{2} + \frac{|\delta|}{2} \right) 
			\left\| \frac{\nabla f}{f} u_\delta \right\|_{L^2}
			+ \frac{1}{2} \|u_\delta\|_{L^2} \\
		&\le& \left( \frac{1}{2} + \frac{|\delta|}{2} 
			+ \frac{\sqrt{C}}{2} \left\| \frac{\nabla f}{f} u_\delta \right\|_{L^2}^{-1} \right) 
			\left\| \frac{\nabla f}{f} u_\delta \right\|_{L^2},
	\end{eqnarray*}
	where $C < \infty$ denotes a bound for \emph{i)}.
	The result follows by taking $\delta \to 0$. 
\end{proof}

	The conditions \emph{i)} and \emph{ii)} above typically hold
	when $f$ is of the form $f \sim \delta_\Omega^{-(k-2)}$ 
	near $\partial\Omega$, 
	where $\delta_\Omega$ is the distance to $\Omega^c$ 
	and $k$ the codimension of $\Omega^c$.
	Let us prove this explicitly for some of the Hardy GSRs considered
	in this paper. 

\begin{prop}
	The constant $(d-p-2)^2/4$ in \eqref{subspace-Hardy} is sharp.
\end{prop}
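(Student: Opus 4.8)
The plan is to deduce sharpness directly from Lemma \ref{lem:sharpness} applied to the domain $\Omega = \{\bx \in \R^d : |\bx \wedge A| > 0\} = \R^d \setminus \bar{A}$ and the ground state $f(\bx) = |\bx \wedge A|^{-(d-p-2)}$ of Theorem \ref{thm:subspace-Hardy}. Since $\bar{A}$ has codimension $k := \codim \Omega^c = d-p \ge 1$, we have $\overline{\Omega} = \R^d$, so the hypothesis on $\Omega$ is met. The key observation is that Theorem \ref{thm:subspace-Hardy} already supplies $|\nabla f|^2/f^2 = (d-p-2)^2 |A|^2/|\bx\wedge A|^2$, so the conclusion \eqref{Hardy-sharpness} of the lemma reads $\int_{\R^d}|\nabla u_\delta|^2 \le (\tfrac14+\epsilon)(d-p-2)^2 \int_{\R^d} |A|^2|\bx\wedge A|^{-2}|u_\delta|^2$. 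Thus the test functions $u_\delta$ force the Rayleigh quotient associated to \eqref{subspace-Hardy} down to $(d-p-2)^2/4$, which together with the lower bound already proved in Theorem \ref{thm:subspace-Hardy} yields sharpness. It therefore remains only to verify conditions \emph{i)} and \emph{ii)}.

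Second, to check these I would split $\R^d = \bar{A} \oplus \bar{A}^\perp$ and write $\bx = (\by, \boldsymbol{z})$ with $\by \in \bar{A} \cong \R^p$ and $\boldsymbol{z} \in \bar{A}^\perp \cong \R^{k}$, so that the distance to the subspace is $\delta_\Omega(\bx) = |\bx\wedge A|/|A| = |\boldsymbol{z}|$, and hence $f = |A|^{-(k-2)}|\boldsymbol{z}|^{-(k-2)}$ and $|\nabla f|^2/f^2 = (k-2)^2|\boldsymbol{z}|^{-2}$. For $d-p>2$ (so that $f$ blows up at $\partial\Omega$) the integrals in \emph{i)} and \emph{ii)} reduce, after carrying out the harmless $\by$-integration against $e^{-|\bx|}$, to radial integrals in $\boldsymbol{z}\in\R^k$ near the origin with integrands $|\boldsymbol{z}|^{-(k-2)(1-\delta)}$ and $|\boldsymbol{z}|^{-2-(k-2)(1-\delta)}$, respectively. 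The exponent thresholds are the whole point: the first is integrable in $\R^k$ uniformly for small $\delta\ge 0$ since $(k-2)(1-\delta)<k$, giving \emph{i)}; the second has radial part $\int_0 r^{-1+(k-2)\delta}\,dr$, which is finite exactly when $\delta>0$ and diverges like $1/\big((k-2)\delta\big)$ as $\delta\to 0$, giving \emph{ii)}.

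Finally, the borderline value $d-p=1$ (the only remaining case allowed by $d-p\neq 2$) must be treated by the reversed-sign variant noted in the remark after Lemma \ref{lem:sharpness}, since there $k-2=-1$ and $f=|\boldsymbol{z}|\to 0$ at $\partial\Omega$; the same radial bookkeeping with $f^{1+\delta}$ in place of $f^{1-\delta}$ again produces a bounded integral in \emph{i)} and a $\int_0 r^{-1+\delta}\,dr$ divergence in \emph{ii)}, and the resulting $u_\delta$ lie in $H^1_0(\Omega)$. I expect no genuine obstacle: the only care needed is (a) handling the two sign regimes $k>2$ and $k=1$ through the remark, and (b) confirming that the exponential cutoff $e^{-|\bx|}$ controls all integrals at infinity while the $\by$-integration stays finite, both of which are routine once the coordinate splitting $\delta_\Omega=|\boldsymbol{z}|$ is in place.
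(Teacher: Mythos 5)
Your proposal is correct and follows essentially the same route as the paper: apply Lemma \ref{lem:sharpness} to the ground state $f(\bx)=|\bx\wedge A|^{-(d-p-2)}$, split $\R^d$ into $\bar{A}\oplus\bar{A}^\perp$ so that $f\propto|\bx_\perp|^{-(d-p-2)}$, and verify \emph{i)} and \emph{ii)} by the same radial computations, with the exponent $r^{-1+(d-p-2)\delta}$ in \emph{ii)} producing the required divergence as $\delta\to 0$. Your explicit case split for $d-p=1$ via the reversed-sign remark is exactly what the paper's $\pmp\delta$ convention encodes implicitly, so there is nothing to add.
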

\begin{proof}
	As in Appendix A,
	we split the space $\R^d$ into variables $\bx_{\parallel} \in \bar{A}$,
	and $\bx_\perp$ orthogonal 
	to $\bar{A}$. Then,
	since $f(\bx) \propto |\bx_\perp|^{-(d-p-2)}$,
	\begin{eqnarray*}
		\int_{\R^d} f^{1-\delta} e^{-|\bx|} dx
		&\propto& \int_{\R^p} \int_{\R^{d-p}} |\bx_\perp|^{-(d-p-2)(1-\delta)} e^{-|\bx|} \,d\bx_\perp \,d\bx_\parallel \\
		&\le& |\mathbb{S}^{d-p-1}| \int_{\R^p} e^{-\frac{1}{2}|\bx_\parallel|} \,d\bx_\parallel 
			\int_{r=0}^\infty r^{-(d-p-2)(1-\delta)} e^{-\frac{r}{2}} r^{d-p-1} \,dr,
	\end{eqnarray*}
	which is uniformly bounded for $0 < |\delta| < \delta_0$, and
	\begin{eqnarray*}
		\int_{\R^d} \frac{|\nabla f|^2}{f^2} f^{1-\delta} e^{-|\bx|} dx
		&\propto& \int_{\R^p} \int_{\R^{d-p}} |\bx_\perp|^{-2-(d-p-2)(1-\delta)} e^{-|\bx|} \,d\bx_\perp \,d\bx_\parallel \\
		&\propto& \int_{r=0}^\infty 
			\int_{\R^p} e^{-\sqrt{|\bx_\parallel|^2 + r^2}} \,d\bx_\parallel 
			\, r^{-1+(d-p-2)\delta} \,dr,
	\end{eqnarray*}
	which is finite for $0 < \pmp\delta < \delta_0$, 
	but tends to infinity when $\delta \to 0$.
	Hence, by Lemma \ref{lem:sharpness}, the constant in the 
	GSR potential 
	$\frac{|\nabla f|^2}{4f^2} = \frac{(d-p-2)^2}{4} \frac{1}{|\bx_\perp|^2}$
	is sharp for $d \neq p+2$.
\end{proof}

\begin{prop} \label{prop:sharpness-mp-separation}
	The constant $N((N-1)d-2)^2/4$ in \eqref{many-particle-separation} 
	is sharp for $(N-1)d > 2$.
\end{prop}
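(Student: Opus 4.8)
The plan is to obtain sharpness directly from Lemma~\ref{lem:sharpness}, in complete parallel with the sharpness proof for \eqref{subspace-Hardy} above, once the correct ground state for the Hardy potential has been identified. Although $\rho^2$ is the ansatz used to \emph{derive} \eqref{many-particle-separation}, it is not harmonic, so for the sharpness argument I would instead use the \emph{effective} ground state
$$
	g := \rho^{4\alpha} = \rho^{-((N-1)d-2)}, \qquad \alpha = -\tfrac{(N-1)d-2}{4},
$$
i.e.\ exactly the weight appearing in the remainder term $\int_\Omega |\nabla \rho^{-2\alpha}u|^2 \rho^{4\alpha}\,dx$ of \eqref{many-particle-separation}. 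Using $|\nabla \rho^2|^2 = 4N\rho^2$ one checks in one line that
$$
	\tfrac14 \frac{|\nabla g|^2}{g^2} = \frac{4N\alpha^2}{\rho^2} = \frac{N((N-1)d-2)^2}{4}\,\frac{1}{\rho^2},
$$
so that the Hardy potential of \eqref{many-particle-separation} is precisely $\tfrac14|\nabla g|^2/g^2$. It therefore suffices to verify that $\Omega=\R^{dN}\setminus\{\bx_1=\cdots=\bx_N\}$ and $f:=g$ satisfy the hypotheses of Lemma~\ref{lem:sharpness}; its conclusion then produces test functions forbidding any constant larger than $N((N-1)d-2)^2/4$, which is what we want.

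For the verification I would reduce to the computation already performed in the subspace case. Writing $x=(y,z)$ with $y$ the $d$-dimensional centre-of-mass coordinate and $z$ the $d(N-1)$-dimensional component orthogonal to the diagonal $\{\bx_1=\cdots=\bx_N\}$, the identity $\rho^2 = N\sum_i |\bx_i - \tfrac1N\sum_k\bx_k|^2 = N|z|^2$ shows that $\rho$ is $\sqrt N$ times the distance to the diagonal, hence a radial coordinate in the relative subspace that is independent of $y$. This puts the problem in exactly the form treated for \eqref{subspace-Hardy}, with the roles of the parallel and transverse variables played by the centre of mass and the relative coordinates, transverse dimension $d(N-1)$, and fundamental-solution exponent $(N-1)d-2$ in place of $d-p-2$. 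Moreover $\overline\Omega=\R^{dN}$, since the diagonal has positive codimension.

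It then remains to evaluate the two radial integrals. Bounding $e^{-|x|}\le e^{-\frac12(|y|+|z|)}$ and integrating out $y$, condition \emph{i)} reduces to
$$
	\int_0^\infty r^{\,d(N-1)-1-((N-1)d-2)(1-\delta)}\,e^{-r/2}\,dr
	= \int_0^\infty r^{\,1+((N-1)d-2)\delta}\,e^{-r/2}\,dr ,
$$
whose integrand is locally integrable at $r=0$ with exponent $\ge 1$, hence uniformly bounded for small $\delta\ge 0$. The extra factor $|\nabla g|^2/g^2\propto \rho^{-2}$ in condition \emph{ii)} lowers the exponent by $2$, yielding
$$
	\int_0^\infty r^{-1+((N-1)d-2)\delta}\,e^{-r/2}\,dr ,
$$
which is finite for every $\delta>0$ but diverges logarithmically as $\delta\to 0$. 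The hypothesis $(N-1)d>2$ enters here precisely to make the coefficient $((N-1)d-2)$ of $\delta$ positive, so that \emph{ii)} holds while \emph{i)} stays bounded. The only real work is this bookkeeping of exponents, namely confirming that $\rho$ behaves like the distance to a codimension-$d(N-1)$ set and that the threshold $(N-1)d-2>0$ makes both conditions hold at once; once that is in place, Lemma~\ref{lem:sharpness} delivers $u_\delta\in H^1(\R^{dN})$ with $\int|\nabla u_\delta|^2\,dx \le (1+\epsilon)\,\tfrac{N((N-1)d-2)^2}{4}\int \rho^{-2}|u_\delta|^2\,dx$, establishing sharpness.
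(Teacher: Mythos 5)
Your proposal is correct and follows the paper's proof in all essentials: both apply Lemma~\ref{lem:sharpness} to the ground state $f=\rho^{4\alpha}=\rho^{-((N-1)d-2)}$, identify $\frac14|\nabla f|^2/f^2$ with the Hardy potential $N((N-1)d-2)^2/(4\rho^2)$, and verify conditions \emph{i)} and \emph{ii)} by reducing to radial integrals with exponents $1+((N-1)d-2)\delta$ and $-1+((N-1)d-2)\delta$ in the transverse dimension $d(N-1)$. The only cosmetic difference is that you use the exact centre-of-mass identity $\rho^2=N|z|^2$, whereas the paper parameterizes $\Omega^c$ by $\bx_N$, sets $\by_i=\bx_i-\bx_N$, and works with the two-sided comparison $|y|^2\le\rho^2\le C|y|^2$; both reductions give the same conclusion.
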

\begin{proof}
	The set $\Omega^c$ in Theorem \ref{thm:many-particle-separation}
	is a linear subspace of $\R^{dN}$ which can be parameterized
	by, say, $\bx_N \in \R^d$.
	We then have the ground state
	\begin{eqnarray*}
		f(x) := \rho^{4\alpha} 
		= \left( \sum_{i<N} |\by_i|^2 + \sum_{i<j<N} |\by_i - \by_j|^2 \right)^{-\frac{(N-1)d-2}{2}},
	\end{eqnarray*}
	where we for fixed $\bx_N$ define $\by_i := \bx_i - \bx_N$.
	Hence, using that
	$$
		|y|^2 := \sum_{i<N} |\by_i|^2 \ \le \ \rho^2 \ 
		\le \ \sum_{i<N} |\by_i|^2 + \sum_{i<j<N} \left( |\by_i|^2 + |\by_j|^2 \right)
		\le C |y|^2
	$$
	(here and in the following, $C$ will denote some unspecified  
	positive constants), we find
	\begin{eqnarray*}
		\int_{\R^{dN}} f^{1-\delta} e^{-|x|} dx
		&=& \int_{\R^d} \int_{\R^{(N-1)d}} \rho^{-((N-1)d-2)(1-\delta)} e^{-|x|} \,dy \,d\bx_N \\
		&\le& C \int_{\R^d} \int_{\R^{(N-1)d}}
			e^{-\sqrt{|x'|^2 + |\bx_N|^2}} |y|^{-((N-1)d-2)(1-\delta)} \,dy \,d\bx_N
	\end{eqnarray*}
	(where $x = (x',\bx_N)$),
	which is uniformly bounded for $0 < \delta < \delta_0$, and
	\begin{multline*}
		\int_{\R^{dN}} \frac{|\nabla f|^2}{f^2} f^{1-\delta} e^{-|x|} dx \\
		\sim C \int_{\R^d} \int_{\R^{(N-1)d}}
			e^{-\sqrt{|x'|^2 + |\bx_N|^2}} |y|^{-2-((N-1)d-2)(1-\delta)} \,dy \,d\bx_N,
	\end{multline*}
	which is finite for $0 < \delta < \delta_0$, 
	but tends to infinity as $\delta \to 0$.
	Hence, the GSR constant in \eqref{many-particle-separation}
	is sharp by Lemma \ref{lem:sharpness}.
\end{proof}

\begin{prop}
	The constant $(d-3)^2/4$ in \eqref{2-particle-area-Hardy} 
	is sharp.
\end{prop}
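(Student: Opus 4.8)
The plan is to apply Lemma \ref{lem:sharpness} directly, just as in the two preceding propositions. Here $n = 2d$, the domain is $\Omega = \{(\bx_1,\bx_2) \in \R^{2d} : \bx_1 \wedge \bx_2 \neq 0\}$ from Theorem \ref{thm:parallel-no3d} with $N=2$, and the ground state is $f(\bx_1,\bx_2) = |\bx_1 \wedge \bx_2|^{-(d-3)}$, for which $\frac{|\nabla f|^2}{4f^2} = \frac{(d-3)^2}{4}\Sigma_1 = \frac{(d-3)^2}{4}\frac{|\bx_1|^2 + |\bx_2|^2}{|\bx_1 \wedge \bx_2|^2}$ is precisely the GSR potential appearing in \eqref{2-particle-area-Hardy}. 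Since $\Omega^c$ (the set of parallel pairs) is an algebraic variety of codimension $d-1 \ge 1$, we have $\overline{\Omega} = \R^{2d}$, and for $d>3$ the ground state $f \to \infty$ at $\partial\Omega$, so we may take $\delta > 0$. It therefore remains only to verify conditions \emph{i)} and \emph{ii)}.

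First I would introduce coordinates adapted to the singular locus, exactly as for the $1$-blade $A = \bx_2$ in Appendix A. For $\bx_2 \neq 0$ write $\bx_1 = \xi\,\hat{\bx}_2 + \bx_1^\perp$ with $\xi \in \R$, $\hat{\bx}_2 = \bx_2/|\bx_2|$, and $\bx_1^\perp$ in the $(d-1)$-dimensional orthogonal complement of $\bx_2$, so that $|\bx_1 \wedge \bx_2| = |\bx_2|\,|\bx_1^\perp|$ and $|\bx_1|^2 + |\bx_2|^2 = |x|^2$. This is an a.e.-valid change of variables, under which $f^{1-\delta} = (s\,r)^{-(d-3)(1-\delta)}$ with $s := |\bx_2|$ and $r := |\bx_1^\perp|$.

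For condition \emph{i)} the relevant local powers near the singular set are $s^{d-1}\cdot s^{-(d-3)(1-\delta)} = s^{2 + (d-3)\delta}$ from the $\bx_2$-integration and $r^{d-2}\cdot r^{-(d-3)(1-\delta)} = r^{1 + (d-3)\delta}$ from the $\bx_1^\perp$-integration; both exponents exceed $-1$ for small $\delta$, so $\int f^{1-\delta} e^{-|x|}\,dx$ is uniformly bounded, the factor $e^{-|x|}\le 1$ being harmless near the origin and supplying convergence at infinity. For condition \emph{ii)} the extra factor $\frac{|\nabla f|^2}{(d-3)^2 f^2} = \frac{|x|^2}{s^2 r^2}$ (using $|\bx_1|^2 + |\bx_2|^2 = |x|^2$) shifts these to $s^{(d-3)\delta}$ and $r^{-1 + (d-3)\delta}$. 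The former remains integrable near $s=0$, while $\int_0 r^{-1 + (d-3)\delta}\,dr \sim \frac{1}{(d-3)\delta}$ is finite for $\delta > 0$ but diverges as $\delta \to 0$; thus condition \emph{ii)} holds, the blow-up originating solely from the direction $\bx_1^\perp \to 0$, in accordance with the heuristic $f \sim \delta_\Omega^{-(k-2)}$ for $k = d-1$. The sharpness then follows from Lemma \ref{lem:sharpness}.

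The one point requiring genuine care — the main obstacle — is that, unlike in the pure-subspace proposition, $f$ here also depends on $|\bx_2|$ through $|\bx_1 \wedge \bx_2| = |\bx_2|\,|\bx_1^\perp|$. I would need to confirm that this extra $s$-dependence contributes no divergence: although $f$ blows up as $\bx_2 \to 0$, the $d$-dimensional Jacobian $s^{d-1}$ renders the $s$-integral convergent (powers $s^{2+(d-3)\delta}$ and $s^{(d-3)\delta}$), so the entire $\delta \to 0$ singularity is governed by the $(d-1)$-codimensional parallel locus $\bx_1^\perp \to 0$ alone. Finally, for $d=2$ one has $f = |\bx_1 \wedge \bx_2| \to 0$ at $\partial\Omega$, and the same computation applies with the sign of $\delta$ reversed, as in the remark following Lemma \ref{lem:sharpness}.
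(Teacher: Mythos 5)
Your proposal is correct and follows essentially the same route as the paper's proof: apply Lemma \ref{lem:sharpness} with $f = |\bx_1 \wedge \bx_2|^{-(d-3)}$, parameterize near the cone-like set $\{\bx_1 \wedge \bx_2 = 0\}$ by splitting one variable into components parallel and orthogonal to the other, and read off the power exponents, with the divergence in condition \emph{ii)} coming solely from the orthogonal direction. The only (immaterial) differences are that you decompose $\bx_1$ relative to $\bx_2$ where the paper fixes $\bx_1$ and decomposes $\bx_2$, and that you treat $\Sigma_1 = |x|^2/(s^2r^2)$ as a single factor absorbed by the exponential weight rather than splitting it into two terms related by the $\bx_1 \leftrightarrow \bx_2$ symmetry.
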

\begin{proof}
	Here $f(x) = |\bx_1 \wedge \bx_2|^{-(d-3)}$ and
	$\Omega^c = \{ (\bx_1,\bx_2) \in \R^{2d} : \bx_1 \wedge \bx_2 = 0 \}$
	is a cone-like set which can be parameterized
	by $\bx_1 \in \R^d \setminus \{0\}$, $\bx_2 \in \R \bx_1$,
	and $\bx_1 = 0$, $\bx_2 \in \R^d$.
	Hence, for each fixed $\bx_1 \neq 0$ we split 
	the second variable
	into $x_{2\parallel}$ along, and 
	$\bx_{2\perp}$ orthogonal to, the line $\R\bx_1$,
	write $|\bx_1 \wedge \bx_2| = |\bx_1||\bx_{2\perp}|$,
	and deduce (with $\delta' := (d-3)\delta$)
	\begin{multline*}
		\int_{\R^{2d}} f^{1-\delta} e^{-|x|} dx \\
		= \int_{\R^d \setminus \{0\}} \int_{\R^{d-1}} 
			\int_{\R} e^{-|x|} dx_{2\parallel}\,
			|\bx_{2\perp}|^{-(d-3) + \delta'} d\bx_{2\perp}\,
			|\bx_1|^{-(d-3) + \delta'} \,d\bx_1,
	\end{multline*}
	and
	\begin{eqnarray*}
		\lefteqn{ \int_{\R^{2d}} \frac{|\nabla f|^2}{f^2} f^{1-\delta} e^{-|x|} dx }\\
		&\propto& \int_{\R^{2d}} \frac{|\bx_1|^2}{|\bx_1 \wedge \bx_2|^2} f^{1-\delta} e^{-|x|} dx
			+ \int_{\R^{2d}} \frac{|\bx_2|^2}{|\bx_1 \wedge \bx_2|^2} f^{1-\delta} e^{-|x|} dx \\
		&=& 2\int_{\R^d \setminus \{0\}} \int_{\R^{d-1}} 
			\int_{\R} e^{-|x|} dx_{2\parallel}\,
			|\bx_{2\perp}|^{-2-(d-3) + \delta'} d\bx_{2\perp}\,
			|\bx_1|^{-(d-3) + \delta'} \,d\bx_1,
	\end{eqnarray*}
	with similar conclusions as in our earlier examples.
\end{proof}

\begin{prop}
	The constant $(d-N)^2/4$ in \eqref{simplex-Hardy} is sharp.
\end{prop}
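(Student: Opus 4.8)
The plan is to invoke Lemma~\ref{lem:sharpness} exactly as in the preceding sharpness proofs, applied to the ground state $f(x) = ((N-1)!\,V(x))^{-(d-N)}$ on $\Omega = \{V>0\}$. Since $\Omega^c = \{V=0\}$ has measure zero we have $\overline{\Omega} = \R^{dN}$, and the proof of Theorem~\ref{thm:volume-noncritical} already supplies $|\nabla f|^2/f^2 = (d-N)^2\,\Sigma^{(N)}$, so the GSR potential equals $\tfrac14|\nabla f|^2/f^2$ and its constant is precisely the one addressed by the lemma. It therefore remains only to verify conditions i) and ii).

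The key step I would carry out is a geometric reduction near the generic part of $\Omega^c$. Fix $\bx_2,\dots,\bx_N$ affinely independent, spanning the $(N-2)$-dimensional affine subspace $W$ with $V(\bx_2,\dots,\bx_N)>0$, and split $\bx_1 = \bx_{1\parallel} + \bx_{1\perp}$ into components parallel and orthogonal to the direction space of $W$, writing $\rho_1 := \dist(\bx_1,W) = |\bx_{1\perp}|$ for the distance in the $(d-N+2)$-dimensional orthogonal complement. Using $A = (\bx_1-\bx_N)\wedge A_1$ from \eqref{simplex-invariance} one gets $(N-1)!\,V = |A| = |A_1|\,\rho_1$, hence $V = \tfrac{1}{N-1}\,\rho_1\,V(\bx_2,\dots,\bx_N)$ and $f \propto \rho_1^{-(d-N)}$ with the constant fixed by $\bx_2,\dots,\bx_N$. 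This is exactly the subspace-Hardy ground state of \eqref{subspace-Hardy} with subspace dimension $p = N-2$ and codimension $d-p = d-N+2 = \codim\Omega^c$.

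Consequently the two integrals reduce, near this generic part of $\Omega^c$, to the same radial integrals already computed for \eqref{subspace-Hardy}: in the $(d-N+2)$-dimensional perpendicular variable, condition i) becomes $\int_0 \rho_1^{\,1+(d-N)\delta}\,d\rho_1 < \infty$ (uniformly in small $\pmp\delta>0$), while condition ii) becomes $\int_0 \rho_1^{\,-1+(d-N)\delta}\,d\rho_1$, finite for $\pmp\delta>0$ but logarithmically divergent as $\delta\to0$. For the divergence it suffices to keep the single $k=1$ contribution $|\nabla f|^2/f^2 \ge (d-N)^2/\rho_1^2$, while near the generic stratum $\Sigma^{(N)} = \sum_k 1/h_k^2 \asymp \rho_1^{-2}$ (each height $h_k\asymp\rho_1$) provides the matching upper bound. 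For $d>N$ we have $f\to\infty$ at $\partial\Omega$ and take $\delta>0$, obtaining $u_\delta\in H^1(\R^{dN})$; for $d=N-1$ we have $f\to0$ and invoke the Remark after Lemma~\ref{lem:sharpness}, reversing the sign of $\delta$ to get $u_\delta\in H^1_0(\Omega)$. In both cases the lemma yields sharpness.

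The main obstacle I expect is that $\Omega^c$ is not a smooth submanifold but a cone-like, stratified set: besides the generic stratum above there are lower-dimensional strata where $\bx_2,\dots,\bx_N$ themselves degenerate. As in the earlier cone-like examples (e.g.\ the proof for \eqref{2-particle-area-Hardy}), these carry strictly higher codimension, so that $f^{1-\delta}$ and $(|\nabla f|^2/f^2)\,f^{1-\delta}$ remain locally integrable there for small $\pmp\delta>0$; and since the logarithmic divergence in ii) is produced entirely by the generic stratum, it is enough to localize the test functions $u_\delta$ near a generic point of $\Omega^c$, the weight $e^{-|x|}$ controlling the tail, to conclude.
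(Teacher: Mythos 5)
Your proposal is correct and follows essentially the same route as the paper's proof: both reduce near the generic stratum of $\{V=0\}$ to the subspace-Hardy radial computation in the $(d-N+2)$-dimensional perpendicular variable (you split $\bx_1$ against the affine span of the remaining points, the paper equivalently splits $\by_{N-1}$ against $\overline{B}$ after centering at $\bx_N$), verify conditions \emph{i)} and \emph{ii)} of Lemma~\ref{lem:sharpness} with the same exponents $\rho_1^{1+(d-N)\delta}$ and $\rho_1^{-1+(d-N)\delta}$, dismiss the higher-codimension strata of the cone, and use the Remark after the lemma with reversed $\delta$ for $d=N-1$.
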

\begin{proof}
	Here $f(x) = |A|^{-(d-N)}$, with
	$A = \bigwedge_{j=1}^{N-1} (\bx_j - \bx_N)$, and
	$\Omega^c = \{ x \in \R^{dN} : A = 0 \}$
	can be parameterized
	by $\bx_N \in \R^d$ and,
	defining $\by_j := \bx_j - \bx_N$ for each fixed $\bx_N$, by 
	\begin{equation} \label{cone-parametrization}
		\by_1 \in \R^d \setminus \{0\}, \quad
		\by_2 \in \R^d \setminus \R\by_1, \quad
		\by_3 \in \R^d \setminus \overline{\by_1 \wedge \by_2}, \quad
		\ldots, \quad
		\by_{N-1} \in \overline{B},
	\end{equation}
	with $B := \by_1 \wedge \ldots \wedge \by_{N-2}$,
	\emph{plus}, $\by_1 = 0$, $\by_{j=2,\ldots,N-1} \in \R^d$, 
	and so forth.

	Hence, for each fixed $\bx_N \in \R^d$ 
	and $\by_{j=1,\ldots,N-2}$ in general position as 
	in \eqref{cone-parametrization} 
	we split the last variable $\by_{N-1}$
	into $\by_\parallel \in \overline{B}$ and 
	$\by_\perp$ orthogonal to $\overline{B}$,
	write $|A| = |B \wedge \by_{N-1}| = |B||\by_\perp|$,
	and deduce (with $\delta' := (d-N)\delta$)
	\begin{eqnarray*}
		\int_{\R^{dN}} f^{1-\delta} e^{-|x|} dx
		= \int_{\R^d} \int_{\R^d \setminus \{0\}} \ldots 
			\int_{\R^d \setminus \overline{\by_1 \wedge \ldots \wedge \by_{N-3}}} 
			\int_{\R^{d-N+2}} 
			\int_{\R^{N-2}} e^{-|x|} d\by_\parallel \\
		\cdot \ 
			|\by_\perp|^{-(d-N) + \delta'} d\by_\perp\,
			|B|^{-(d-N) + \delta'} \,d\by_{N-2} \ldots d\by_1 \,d\bx_N,
	\end{eqnarray*}
	and
	\begin{multline*}
		\int_{\R^{dN}} \frac{|\nabla f|^2}{f^2} f^{1-\delta} e^{-|x|} dx 
		\ \propto \ \int_{\R^{dN}} \Sigma^{(N)} f^{1-\delta} e^{-|x|} dx \\
		= N\int_{\R^d} \int_{\R^d \setminus \{0\}} \ldots 
			\int_{\R^d \setminus \overline{\by_1 \wedge \ldots \wedge \by_{N-3}}} 
			\int_{\R^{d-N+2}} 
			\int_{\R^{N-2}} e^{-|x|} d\by_\parallel \\
		\cdot \ 
			|\by_\perp|^{-2-(d-N) + \delta'} d\by_\perp\,
			|B|^{-(d-N) + \delta'} \,d\by_{N-2} \ldots d\by_1 \,d\bx_N,
	\end{multline*}
	with similar conclusions as in our earlier examples.
\end{proof}

\end{document}